\PassOptionsToPackage{dvipsnames}{xcolor}
\documentclass[sigplan,screen,nonacm]{acmart}

\usepackage{mathpartir}
\usepackage{pgf}
\usepackage{stmaryrd}

\usepackage{cleveref}
\crefname{lstlisting}{listing}{listings}
\Crefname{lstlisting}{Listing}{Listings}
\crefname{line}{line}{lines}
\Crefname{line}{Line}{Lines}

\newcommand{\commentNote}[1]{}

\usepackage{color}
\definecolor{keywordcolor}{rgb}{0.7, 0.1, 0.1}   %
\definecolor{tacticcolor}{rgb}{0.0, 0.1, 0.6}    %
\definecolor{commentcolor}{rgb}{0.4, 0.4, 0.4}   %
\definecolor{symbolcolor}{rgb}{0.0, 0.1, 0.6}    %
\definecolor{sortcolor}{rgb}{0.1, 0.5, 0.1}      %
\definecolor{attributecolor}{rgb}{0.7, 0.1, 0.1} %

\usepackage{listings}

\DeclareUnicodeCharacter{2080}{\ensuremath{_0}}
\DeclareUnicodeCharacter{2081}{\ensuremath{_1}}
\DeclareUnicodeCharacter{2082}{\ensuremath{_2}}
\DeclareUnicodeCharacter{2083}{\ensuremath{_3}}
\DeclareUnicodeCharacter{2084}{\ensuremath{_4}}
\DeclareUnicodeCharacter{2085}{\ensuremath{_5}}
\DeclareUnicodeCharacter{2086}{\ensuremath{_6}}
\DeclareUnicodeCharacter{2087}{\ensuremath{_7}}
\DeclareUnicodeCharacter{2088}{\ensuremath{_8}}
\DeclareUnicodeCharacter{2089}{\ensuremath{_9}}
\DeclareUnicodeCharacter{1D62}{\ensuremath{_i}}
\DeclareUnicodeCharacter{2C7C}{\ensuremath{_j}}
\DeclareUnicodeCharacter{2090}{\ensuremath{_a}}
\DeclareUnicodeCharacter{2099}{\ensuremath{_n}}
\DeclareUnicodeCharacter{2098}{\ensuremath{_m}}
\DeclareUnicodeCharacter{209A}{\ensuremath{_p}}
\DeclareUnicodeCharacter{00B9}{\ensuremath{^1}}

\setcopyright{acmlicensed}
\copyrightyear{2018}
\acmYear{2018}
\acmDOI{XXXXXXX.XXXXXXX}

\acmJournal{JACM}
\acmVolume{37}
\acmNumber{4}
\acmArticle{111}
\acmMonth{8}

\author{Timon Böhler}
\orcid{0009-0002-9964-7367}
\affiliation{%
  \institution{TU Darmstadt}
  \city{Darmstadt}
  \country{Germany}
}
\email{timon.boehler@tu-darmstadt.de}

\author{Simon Daniel}
\orcid{0009-0003-1636-7253}
\affiliation{%
  \institution{TU Darmstadt}
  \city{Darmstadt}
  \country{Germany}
}
\email{simon.daniel@tu-darmstadt.de}

\author{David Richter}
\orcid{0000-0002-8672-0265}
\affiliation{%
  \institution{TU Darmstadt}
  \city{Darmstadt}
  \country{Germany}
}
\email{david.richter@tu-darmstadt.de}

\author{Pascal Weisenburger}
\orcid{0000-0003-1288-1485}
\affiliation{%
  \institution{University of St. Gallen}
  \city{St. Gallen}
  \country{Switzerland}
}
\email{pascal.weisenburger@unisg.ch}

\author{Mira Mezini}
\orcid{0000-0001-6563-7537}
\affiliation{%
  \institution{TU Darmstadt}
  \city{Darmstadt}
  \country{Germany}
}
\affiliation{%
  \institution{hessian.AI}
  \city{Darmstadt}
  \country{Germany}
}
\affiliation{%
  \institution{National Research Center for Applied Cybersecurity ATHENE}
  \city{Darmstadt}
  \country{Germany}
}
\email{mezini@informatik.tu-darmstadt.de}

\begin{document}

\title{Mechanizing Choreographic Programs and Hoare Logic with State Transformers}

\begin{CCSXML}
<ccs2012>
   <concept>
       <concept_id>10011007.10011006.10011008.10011009.10010177</concept_id>
       <concept_desc>Software and its engineering~Distributed programming languages</concept_desc>
       <concept_significance>500</concept_significance>
       </concept>
   <concept>
       <concept_id>10003752.10003790.10011741</concept_id>
       <concept_desc>Theory of computation~Hoare logic</concept_desc>
       <concept_significance>300</concept_significance>
       </concept>
   <concept>
       <concept_id>10003752.10010124.10010138.10010142</concept_id>
       <concept_desc>Theory of computation~Program verification</concept_desc>
       <concept_significance>500</concept_significance>
       </concept>
 </ccs2012>
\end{CCSXML}

\ccsdesc[500]{Software and its engineering~Distributed programming languages}
\ccsdesc[300]{Theory of computation~Hoare logic}
\ccsdesc[500]{Theory of computation~Program verification}
\keywords{choreographic programming, Lean, interactive theorem proving, mechanization}

\newcommand{\chorColor}[1]{\mathsf{\textcolor{RoyalBlue}{#1}}}
\newcommand{\procColor}[1]{\textbf{\textcolor{RedOrange}{#1}}}
\newcommand{\sigs}{\ensuremath{\mathsf{Method}}}
\newcommand{\interp}{\ensuremath{\mathsf{I}}}
\newcommand{\procedure}{\ensuremath{\mathsf{Func}}}
\newcommand{\pden}{\ensuremath{\mathcal{D}}}

\newcommand{\chor}[0]{\mathsf{Chor}}
\newcommand{\eff}[2]{\ensuremath{\chorColor{run}_{#1}(#2)}}
\newcommand{\com}[4]{\ensuremath{\chorColor{com}_{#1,#2}(#3, #4)}}
\newcommand{\switch}[2]{\ensuremath{\chorColor{bcast}_{#1}(#2)}}
\newcommand{\done}{\ensuremath{\chorColor{done}}}
\newcommand{\call}[1]{\ensuremath{\chorColor{call}(#1)}}
\newcommand{\seq}{\ensuremath{~\chorColor{\fatsemi}~}}

\newcommand{\stat}[2]{\langle #1 ~\vert~ #2 \rangle}
\newcommand{\effl}[1]{\ensuremath{\mathsf{run}_{#1}}}
\newcommand{\sendl}[2]{\ensuremath{\mathsf{com}_{#1,#2}}}
\newcommand{\broadl}[1]{\ensuremath{\mathsf{bcast}_{#1}}}
\newcommand{\calll}{\ensuremath{\mathsf{call}}}
\newcommand{\labelC}[1]{\ensuremath{\chorColor{\to}_{#1}}}
\newcommand{\labelN}[1]{\ensuremath{\textcolor{RedOrange}{\leadsto}_{#1}}}
\newcommand{\stepC}[5]{\stat{#1}{#2}~\labelC{#3}~\stat{#4}{#5}}
\newcommand{\stepN}[5]{\stat{#1}{#2}~\labelN{#3}~\stat{#4}{#5}}
\newcommand{\stepM}[5]{\stat{#1}{#2}~\chorColor{\twoheadrightarrow}^{#3}~\stat{#4}{#5}}
\newcommand{\stepCS}[4]{\stat{#1}{#2}~\chorColor{\to}^\star~\stat{#3}{#4}}
\newcommand{\stepNS}[4]{\stat{#1}{#2}~\textcolor{RedOrange}{\leadsto}^\star~\stat{#3}{#4}}

\newcommand{\effp}[1]{\ensuremath{\procColor{run}(#1)}}
\newcommand{\send}[2]{\ensuremath{\procColor{send}_{#1}(#2)}}
\newcommand{\recv}[2]{\ensuremath{\procColor{recv}_{#1}(#2)}}
\newcommand{\broad}[1]{\ensuremath{\procColor{shout}(#1)}}
\newcommand{\switchp}[1]{\ensuremath{\procColor{listen}_{#1}}}
\newcommand{\donep}[0]{\ensuremath{\procColor{\textup{done}}}}
\newcommand{\proc}[1]{\ensuremath{\mathsf{Proc}_{#1}}}
\newcommand{\callp}[1]{\ensuremath{\procColor{call}(#1)}}
\newcommand{\seqp}{\ensuremath{~\textcolor{RedOrange}{\fatsemi}~}}

\newcommand{\proj}[1]{\ensuremath{\llbracket #1 \rrbracket}}

\begin{abstract}

Choreographic programming is a programming model for developing distributed applications where an entire communication protocol is written as a single program, which a compiler then projects to one process per participant. Choreographic programming abstracts over low-level network communication primitives such as sockets, and provides a high degree of safety guarantees with deadlock freedom ensured by construction.

Mechanizing choreographies necessarily deals with both operations specific to distributed programming and standard (local) operations that also occur in non-distributed programs, as well as the typical issues of binding and substitution. We aim to sidestep the latter issues, thereby obtaining a more concise mechanization that focuses on the essential distributed aspects of choreographies.

To this end, we use a method recently proposed by Thiemann to elegantly model deadlock-free processes in a dependently typed language: Using state transformers to represent the computations performed by each process.

We bring the state transformer model to choreographies, allowing us to reduce the usual mechanization effort around binding and substitution, and to abstract over the details of the ``local'' aspects of the language.

We mechanize in Lean a choreographic language that supports point-to-point communication, broadcasting, recursive procedures, and \emph{local stateful methods}, allowing each participant to be assigned a different set of methods. We prove soundness and completeness of endpoint projection, establish deadlock freedom for the projected processes, prove confluence, and verify a Hoare logic for choreographies.

\end{abstract}

\maketitle

\section{Introduction}

Distributed programs are notoriously hard to write. Two major reasons are the complexity of managing communication between multiple programs and the
risk of deadlocks. Choreographic programming~\cite{Montesi23} promises to alleviate these issues by allowing distributed programs to be described using a
\emph{global} view where the behavior of all participants is contained in a single program.
To execute choreographies, they are \emph{projected} to a set of communicating processes (or \emph{endpoints}).
The main correctness properties one wants to ensure are that (i) endpoint projection preserves the semantics of the choreography and (ii) the projected network is free from deadlocks.

While many choreographic languages with wildly differing feature sets have been proposed, \emph{mechanizations} of choreographies are rarer,
and are often quite complex, relying on intricate proof developments~\cite{CruzFilipeMP23}.
For instance, a recent mechanization of multiparty session types~\cite{Castro-PerezFJ26} notes that for the most difficult theorem, ``the majority of the proof is about dealing with binders, renaming, etc.''.
In this work, we demonstrate an encoding of choreographies using \emph{state transformers}, which have previously been applied by
Thiemann~\cite{Thiemann23} to the modeling of processes in the context of session types.
The idea is that each process is assigned a state (of an arbitrary type). Operations like sending
and receiving are then modelled by meta-language functions which read the state to produce a message
or which read a message and return a new state.
This encoding has the advantage that it does not require a context of variables for each participant, avoids issues relating to naming
and binding, and does not require explicitly defining a notion of locally evaluated expressions, as the embedding automatically supports arbitrary meta-language
expressions. We therefore get a neat separation between the essential, distributed features of choreographies and the local aspects that do not
influence the proof of correctness. Additionally, embedded programs using state
transformers tend to be more readable than approaches using encodings
like de Bruijn indices or locally nameless.

Using this encoding, we provide a mechanized proof of the soundness and completeness of endpoint projection and of deadlock freedom of projected networks.
The choreographies we describe can (i) send messages between two participants, (ii) switch between branches of the choreography by communicating to each participant a message stating which branch was chosen, (iii) call (potentially recursive) procedures, and (iv) perform role-dependent local operations.
The last feature is formalized by mapping a set of state-transformer functions to each role, allowing the set of methods to differ
between different participants (for example, a participant might have a method for modifying a local database).
To our knowledge, this feature has not previously been formalized, and the ease of mechanizing it underscores the benefits of the state-transformer approach.

We further mechanize a proof of confluence for the choreography semantics and a partial correctness Hoare logic, enabling formal reasoning about choreographies.

Our contributions are:
\begin{itemize}
\item A formalization of choreographies via state transformers, which avoids tedious handling of naming and variable binding.
\item A mechanized proof of soundness and completeness of endpoint projection, of deadlock freedom for projected networks, of partial correctness for an interpreter, of confluence for choreographies, and of a partial correctness Hoare logic.
\item The introduction and mechanization of role-dependent methods.
\end{itemize}

The paper is structured as follows: In Section~\ref{sec:bg}, we introduce choreographies, go through
the set of features our choreographic calculus supports, and explain the embedding using state transformers. Section~\ref{s:mech} describes the formalization of choreographies and processes, and of their small-step operational semantics and of projection. More importantly, it contains the soundness, completeness and deadlock freedom proofs, and the proof of partial correctness for an interpreter with respect to the operational semantics.
We discuss related work in Section~\ref{sec:rel} before concluding in Section~\ref{sec:conc}.

\paragraph{Colors}

Following Patrignani's~\cite{Patrignani2020} recommendation to visually distinguish the syntax of different languages in the same paper, we distinguish choreographies and processes through different colors and different fonts: Choreographic syntax is in a blue sans-serif font (e.g.: \done), while process syntax is in a red bold serif font (e.g.: \donep).

An artifact containing the mechanization is available~\cite{boehler26}.

\section{Representing Processes and Choreographies}
\label{sec:bg}

\subsection{State Transformers}

When participants in a choreography receive messages or call
local methods, their state changes. Standard approaches model the receiving of messages by extending the participant's environment, i.e. by binding a new variable.
In the state transformer approach, in contrast, the participant's state is a value of an arbitrary type (which can be different between each participant).

Communication in our language is formalized as follows.

\[
  \inferrule*{p,q : \mathsf{Role} \\ p \neq q \\ c : \chor{}\\\\ 
    v : \mathsf{LState}_p \to m \\
    u : \mathsf{LState}_q \times m \to \mathsf{LState}_q}
    {\com{p}{q}{v}{u}; c : \chor{}}
\]

A communication from role \lstinline{p} to role \lstinline{q}
is defined by two functions: \lstinline{v} describes how
\lstinline{p} computes the message that it sends from its own state, while \lstinline{u} describes how \lstinline{q} updates
its own state upon receiving the message.
Note that the message type \lstinline{m} is a parameter of the
\lstinline{com} constructor, meaning that each occurrence of
\lstinline{com} in a program can have a different message type.

We describe a simple protocol between two participants Alice and Bob,
in which Alice tries to guess Bob's secret number, upon which he tells her
whether she guessed correctly. Each participant's state is a Boolean. 
First, Alice sends a message containing her $\mathrm{guess}$; upon receiving the message,
Bob compares it to his $\mathrm{secret}$ and tells Alice about the result.

\[\begin{array}{l}
\com{\mathsf{alice}}{\mathsf{bob}}{(\lambda\_.~\mathrm{guess})}{(\lambda s~g.~g = \mathrm{secret})}; \\
\com{\mathsf{bob}}{\mathsf{alice}}{(\lambda b.~b)}{(\lambda s~b.~b)}; \done{}
\end{array}\]

We show the Lean code below. Note that calls to \lstinline|.com| require a
proof that the participants are distinct. In the example below, this proof is derived automatically using implicit arguments that call Lean's \lstinline|decide|
tactic which can discharge decidable propositions (in this case, that the two constant values \lstinline|.alice| and \lstinline|.bob| are different). Note that \lstinline!<|! is low-precedence function application (like \lstinline!$! in Haskell).

\begin{lstlisting}
.com .alice .bob (λ _ => guess) (λ _ m => m == secret) <|
.com .bob .alice (λ b => b) (λ _ b => b) <|
.done
\end{lstlisting}

\paragraph{Simulating variables.}

As mentioned, we use state transformers instead of variable contexts.
We now show how variable context can be simulated in our approach.
In addition, it is easy to see that no process can read another process' data, as the functions all operate on the local state only. When using normal contexts, one has to explicitly keep track which
variables are accessible from which process.

Variables can be modeled by instantiating each process' local state with its variable context.
Consider the following example where the process $p$'s state has type $\{a: \mathbb{N}, b: \mathbb{N}\}$ and the process $q$'s state has type $\{x: \mathbb{N}, y: \mathbb{N}\}$.
The choreography first has $p$ send $a+1$ to $q$, which $q$ stores in $x$.
Then $q$ sends $x+y$ to $p$, which $p$ stores in $b$.
We write the program in our representation; below, we show
pseudocode (inspired by HasChor~\cite{ShenKK23}) for comparison.
\[
\begin{array}{l}
\com{p}{q}{(\lambda s.~s.a+1)}{(\lambda s~x'.~s[x \mapsto x'])}; \\
\com{q}{p}{(\lambda s.~s.x+s.y)}{(\lambda s~b'.~s[b \mapsto b'])}; \done{}
\end{array}
\]
\begin{lstlisting}
         let x ← (p, a + 1) ~> q
         let b ← (q, x + y) ~> p
         return ()
\end{lstlisting}

While most choreographic languages use a functional style where messages introduce new variable bindings,
our representation using state transformers is also very similar to the language used by Cruz{-}Filipe and Montesi~\cite{Cruz-FilipeM16b} (but not mechanized) to write \emph{imperative} algorithms as
choreographies.

\subsection{Broadcasting}

Our choreographic language supports branching. For branching to be projectable to a network of processes,
it is necessary that every involved process knows which branch has been taken.
We follow HasChor's approach in adding a $\switch{p}{v}; c$ construct, in which the process $p$ transmits the
same message to all participants (computed by $v$), which is then used to choose which branch to take.
Here, $c$ is a continuation which takes the message's value and returns a choreography (i.e., syntax).
The idea of representing the continuation of an operation as a meta-language from values to syntax
is widely used in embedding effectful languages~\cite{Chlipala21}, like in Freer Monads~\cite{Kiselyov15} or Interaction trees~\cite{XiaZHHMPZ20}. Broadcasting can hence be seen as a globally executed effect,
whose result determines how the choreography continues.

We demonstrate broadcasting below for an example where the message type is Boolean, meaning we choose between
two branches. Again, Alice tries to guess Bob's secret, but this time the result decides whether
Bob sends Alice a message in return.

\[
\begin{array}{l}
\com{\mathsf{alice}}{\mathsf{bob}}{(\lambda\_.~\mathrm{guess})}{(\lambda\_.~\lambda m.~m = \mathrm{secret})}; \\
\switch{\mathsf{bob}}{(\lambda s.~s)}; \\
\lambda m.~\begin{cases}
  \com{\mathsf{bob}}{\mathsf{alice}}{(\lambda\_.~\mathrm{"hello"})}{(\lambda s\_.~s)}; \done{} & \text{if } m \\
  \done{} & \text{if } \neg m
\end{cases}
\end{array}
\]

In Lean, the example looks as follows:

\begin{lstlisting}
.com .alice .bob (λ _ => guess) (λ _ m => m == secret) <|
.bcast .bob (λ s => s) λ
  | true => .com .bob .alice (λ _ => "hello") (λ s _ => s) .done
  | false => .done
\end{lstlisting}

\subsection{Methods}

In a distributed network, different participants may have different capacities, for example, one participant may have access to a database that others do not have access to. We model this using role-dependent methods. More precisely, our approach allows developers to define a type family \lstinline{Method} mapping each role to a type representing the role's methods. They then define a function \lstinline{I} which maps each method to its interpretation, which is a function that takes the role's state and returns a new state.

In the example below, the participant Bob has access to a method \lstinline{randNat} (implemented using the Lean function of the same name), allowing him to generate random numbers. His state consists of the most recently generated number and the state used by the number generator. The choreography consists of Bob generating a random number and passing it to Alice, who adds it to her current
state.

\[\begin{array}{l}
\eff{\mathsf{bob}}{\mathsf{randNat}(1, 100)}; \\
\com{\mathsf{bob}}{\mathsf{alice}}{\lambda(n. ~n)}{(\lambda s.~\lambda n.~s + n)}; \done{}
\end{array}\]

To define the choreography in Lean, we first define Bob's type of methods \lstinline{MethodBob}.
We then define the choreography's signature (more on that in Section~\ref{s:mech}),
which means specifying the state type for each process, as well as the available methods
and their interpretation.

\begin{lstlisting}
inductive MethodBob where | randNat (lo hi : Nat)

instance RandEx : Sig where -- unused parameters omitted
  LState | .alice => Nat   | .bob => Nat × StdGen
  Method | .alice => Empty | .bob => MethodBob
  I | .bob, .randNat lo hi, (_, gen) => randNat gen lo hi

def randex : @Chor RandEx :=
  .run .bob (.randNat 1 100) <|
  .com .bob .alice (λ (n, _) => n) (λ (s n : Nat) => s + n)
  .done
\end{lstlisting}

\paragraph{Procedures}

Similar to methods are procedures, which however evaluate to choreographies, can be recursive,
and are called globally. The syntax is $\call{f}; c$, where $f$ is a procedure name.

\subsection{Interpretation}
\label{s:int}

We also give an interpreter for our state-transformer-based choreographies.
The interpreter below takes a set of procedures, a global state,
and a choreography. It evaluates the choreography (deterministically) and outputs the final global state.
Here, $\pden$ is a function mapping each procedure to its function body, $g$ contains each processes state,
$\seq$ sequentially composes two choreographies, $g[p \mapsto x]$ sets the state of process $p$
to $x$, and $\mathsf{I}_p(e, g_p)$ runs method $e$ on the local state of process $p$.

\[
\begin{array}{rcl}
\mathcal{I}(\pden, g, \eff{p}{e}; c) &\triangleq& \mathcal{I}(\pden, g[p \mapsto \mathsf{I}_p(e, g_p)], c) \\
\mathcal{I}(\pden, g, \com{p}{q}{v}{u}; c) &\triangleq& \mathcal{I}(\pden, g[q \mapsto u(g_q, v(g_p))], c) \\
\mathcal{I}(\pden, g, \switch{p}{v}; c) &\triangleq& \mathcal{I}(\pden, g, c(v(g_p))) \\
\mathcal{I}(\pden, g, \call{f}; c) &\triangleq& \mathcal{I}(\pden, g, \pden(f) \seq c) \\
\mathcal{I}(\pden, g, \done{}) &\triangleq& g
\end{array}
\]

We have implemented a version of this interpreter, handling nontermination using a fuel argument and the \lstinline{Option} monad (see Section~\ref{sec:interp}), for which we
have proven partial correctness, i.e., if it returns some final state, that state is reachable in the operational semantics. All our example programs also work with the verified fuel-based interpreter.

\section{Mechanization}
\label{s:mech}

In this section, we mechanize choreographies and processes.
Our main result is deadlock freedom, which is a central guarantee provided by choreographic programming.

\subsection{Signature}

Our development is parametrized over
(i) a %
  type $\mathsf{Role}$ of process identifiers, which is assumed to have decidable equality,
(ii) a type $\mathsf{LState}$ of local states for each process identifier,
(iii) a type $\sigs{}$ of methods for each process identifier that are available there locally,
(iv) a type $\procedure{}$ of procedure identifiers, and
(v) a function $\interp{}$ interpreting each method as a state transformer.

\[\begin{array}{l}
\mathsf{Role} : \mathsf{Type} \qquad
\mathsf{LState} : \mathsf{Role} \to \mathsf{Type} \\
\sigs{} : \mathsf{Role} \to \mathsf{Type} \qquad
\procedure{} : \mathsf{Type} \\
\interp{} :  (p : \mathsf{Role}) \to \sigs{}_p \to \mathsf{LState}_p \to \mathsf{LState}_p
\end{array}\]

We additionally require $\mathsf{Role}$ to be non-empty (\lstinline!Inhabited!) and to have decidable equality, as made explicit by the Lean mechanization below, where the parameters are captured in a type class, to enable them being passed implicitly, reducing notational overhead.

\begin{lstlisting}
class Sig where
  Role : Type
  (inhRole : Inhabited Role) (decRole : DecidableEq Role)
  LState : Role → Type
  Method : Role → Type
  Func : Type
  I : (p : Role) → Method p → LState p → LState p
\end{lstlisting}

\subsection{Syntax}
\paragraph{Choreographies.}

\begin{figure}
\begin{mathpar}
  \inferrule*{p : \mathsf{Role} \\ e : \sigs{}(p) \\ c : \chor{}}
    {\eff{p}{e}; c : \chor{}}

  \inferrule*{p,q : \mathsf{Role} \\ p \neq q \\ c : \chor{}\\ 
    v : \mathsf{LState}_p \to m \\
    u : \mathsf{LState}_q \times m \to \mathsf{LState}_q}
    {\com{p}{q}{v}{u}; c : \chor{}}

  \inferrule*{p : \mathsf{Role} \\\\ v : \mathsf{LState}_p \to m \\\\ c : m \to \chor{}}
    {\switch{p}{v}; c : \chor{}}

  \inferrule*{c : \chor{} \\ f : \mathsf{Func}}
    {\call{f}; c : \chor{}}

  \inferrule*{ }
    { \done{} : \chor{}}
\end{mathpar}
\caption{Intrinsically typed syntax of choreographies.}
\label{fig:choreographies}
\end{figure}

We model choreographies as sequential programs featuring message passing, branching,
local methods, and recursive procedures.
The syntax and typing rules for choreographies are given in \cref{fig:choreographies}.
We formalize choreographies using an inductive type $\chor$, which is
parametrized over a signature $\sigs$, giving each process a set of methods.
$\chor$ is constructed by one of the following rules:

\begin{itemize}
\item $\eff{p}{e};  c$ -- Run a method $e$ on process $p$ from the process-specific method set ${\sigs{}_p}$, then continue with $c$.
\item $\com{p}{q}{v}{u}; c$ -- Send a message (computed from the sender's state by $v : \mathsf{LState}_p \to m$) from process ${p}$ to $q$ (with proof that ${p \neq q}$).
After communicating, process ${q}$ updates its local state using ${u : \mathsf{LState}_q \to m \to \mathsf{LState}_q}$, and the choreography continues as $c$.

\item $\switch{p}{v}; c$ -- Broadcast from process ${p}$, sending a message computed by ${v : \mathsf{LState}_p \to m}$, with the continuation depending on the message, ${c: m \to \chor}$.
The broadcast involves all participants, following the approach taken by HasChor~\cite{ShenKK23}.

\item $\call{f}; c$ -- Call a procedure $f$ and continue with $c$.

\item $\done{}$ -- The terminal choreography, indicating no further behavior.
\end{itemize}

In Lean, the syntax is defined as an inductive type, parametrized over a signature \lstinline{sig}. Notably, we do not need a notion of variable context, instead using meta-language functions in \lstinline|com| and \lstinline|bcast|.

\begin{lstlisting}
inductive Chor [sig : Sig] : Type 1
| run (p : sig.Role) (eff : sig.Methods p) (c : Chor)
| com {m : Type} (p q : sig.Role) (h : p ≠ q)
  (v : sig.LState p → m)
  (u : sig.LState q → m → sig.LState q) (c : Chor)
| bcast {m : Type} (p : sig.Role)
  (v : sig.LState p → m) (c : m → Chor)
| call (x : sig.Func) (c : Chor)
| done
\end{lstlisting}

\paragraph{Processes.}

\begin{figure} 
\begin{mathpar}
  \inferrule*{ e : \sigs{}(p) \\ b : \proc{p} }
   { \effp{e}; b : \proc{p} }

  \inferrule*{ q : \mathsf{Role} \quad b : \proc{p} \\\\ v : \mathsf{LState}_p \to m }
   { \send{q}{v}; b : \proc{p} }

  \inferrule*{ q : \mathsf{Role} \quad b : \proc{p} \\\\ u : \mathsf{LState}_p \times m \to \mathsf{LState}_p }
   { \recv{q}{u}; b : \proc{p} }

  \inferrule*{ b : \proc{p} \\ f : \mathsf{Func} }{ \callp{f}; b : \proc{p} }

  \inferrule*{v : \mathsf{LState}_p \to m \\ b : m \to \proc{p}}
   { \broad{v}; b : \proc{p} }

  \inferrule*{ q : \mathsf{Role} \\ b : m \to \proc{p} }
   { \switchp{q}; b : \proc{p} }

  \inferrule*{ }
    { \donep{} : \proc{p} }
\end{mathpar}
\caption{Intrinsically typed syntax of processes.}
\label{fig:processes}
\end{figure}

The type $\proc{p}$ in \cref{fig:processes} models the behavior of individual processes.
It captures the local actions a process can perform, including method execution, sending a message, receiving a message, and state-dependent branching.
The constructors of $\proc{p}$ are:

\begin{itemize}
\item $\effp{e}; b$: Call a method $e$ and continue with $b$.
\item $\send{p}{v}; b$ -- Send a message to process $p$, where the message is computed by $v : \mathsf{LState} \to m$, then continue as $b$.
\item $\recv{p}{u}; b$ -- Receive a message from process $p$, updating the local state via $u : \mathsf{LState} \times m \to \mathsf{LState}$, then continue as $b$.
\item $\broad{v}; b$ -- Broadcast a message $m$ computed by $v$, then feed $m$ to the continuation $b : m \to \proc{p}$.
\item $\switchp{p}; b$ -- Wait for a broadcast from process $p$, then continue based on the received message with $b : m \to \proc{p}$.
\item $\callp{f}; b$ -- Call procedure $f$, then continue with $b$.
\item $\donep{}$ -- Terminate.
\end{itemize}

The corresponding Lean code is:

\begin{lstlisting}
inductive Proc (p : sig.Role) : Type 1
| run (e : sig.Method p) (c : Proc p)
| send {m : Type} (q : sig.Role)
  (v : sig.LState p → m) (b : Proc p)
| recv {m : Type} (q : sig.Role)
  (u : sig.LState p → m → sig.LState p) (b : Proc p)
| shout {m : Type} (v : sig.LState p → m) (b : m → Proc p)
| listen {m : Type} (q : sig.Role) (b : m → Proc p)
| call (x : sig.Func) (b : Proc p)
| done
\end{lstlisting}

\subsection{Endpoint Projection}

The endpoint projection function $\proj{\cdot}$ maps a choreography to a network, which assigns each process its behavior extracted from the choreography.
Networks are defined as follows:
\[ \mathsf{Netw} \triangleq (p : \mathsf{Role}) \to \mathsf{Proc_p}. \]
We formally define $\proj{\cdot}$ in \cref{fig:formepp}, such
that $\proj{c}_r$ is the projection of choreography $c$ to process $r$.
When the choreography calls a method at process ${p}$, only the projection of ${p}$ will contain this method, while other processes skip it and continue projecting the continuation.
When the choreography communicates a message from ${p}$ to ${q}$, the projection constructs a send action for ${p}$, a receive action to ${q}$, and other processes proceed with the continuation.
When the choreography branches via a broadcast at process ${p}$, the projection constructs a shout for ${p}$, and a corresponding listen for other processes.
When the choreography is done, all processes are done.

\begin{figure}
\[
\begin{array}{rcl}
\proj{c} &:& \chor{} \to \mathsf{Netw} \\[6pt]
\proj{\eff{p}{e}; c'}_r &\triangleq&
  \begin{cases}
    \effp{e}; \proj{c'}_r & \text{if } r = p \\
    \proj{c'}_r           & \text{if } r \neq p
  \end{cases} \\[12pt]
\proj{\com{p}{q}{v}{u}; c'}_r &\triangleq&
  \begin{cases}
    \send{q}{v}; \proj{c'}_r & \text{if } r = p \\
    \recv{p}{u}; \proj{c'}_r & \text{if } r = q \\
    \proj{c'}_r              & \text{if } r \notin \{p,q\}
  \end{cases} \\[20pt]
\proj{\switch{p}{v}; c'}_r &\triangleq&
  \begin{cases}
    \broad{v}; \lambda m. \proj{c'(m)}_r    & \text{if } r = p \\
    \switchp{p}; \lambda m. \proj{c'(m)}_r  & \text{if } r \neq p
  \end{cases} \\
\proj{\call{f}; c}_r &\triangleq& \callp{f}; \proj{c}_r \\
\proj{\done}_r &\triangleq& \donep
\end{array}
\]
\caption{Projection from choreographies to networks.}
\label{fig:formepp}
\end{figure}

\subsection{Semantics}

In this section, we describe the operational semantics (as labelled transition systems) for choreographies and \emph{networks}, which are assignments of processes to roles.

For both choreographies and networks, a step is labeled with the type of operation that is performed and the processes that perform the step.
Concretely, the possible labels are:
\[
\mathsf{Label} ::= \effl{p} \mid \sendl{p}{q} \mid \broadl{p} \mid \calll
\]
When proving soundness, the label of a network step tells us which processes just performed an operation, allowing us to determine the corresponding choreography step.

The semantics additionally needs to evaluate procedure calls. For this, a parameter
$\pden$ is assumed, which maps each procedure to its definition.
\[\pden : \mathsf{Func} \to \mathsf{Chor}\]

\paragraph{Choreographies.}

The operational semantics for choreographies is given in \cref{fig:sem-choreographies}.
It is a binary relation over tuples of the form $\stat{c}{g}$, where $c$ is a choreography and $g : \mathsf{GState}$ a global state, which maps process identifiers to their local state:
\[
\begin{array}{lll}
\mathsf{GState} &\triangleq& (p : \mathsf{Role}) \to \mathsf{LState}_p
\end{array}
\]

A method $e$ on process $p$ is run by applying its semantics $I_p(e, g_p)$ and updating
$p$'s state accordingly.
When executing a communication $\com{p}{q}{v}{u}$, we update the state of the receiver ($q$) using $u$.

The computation rules, \textsc{Crun}, \textsc{Ccom}, \textsc{Cbcast}, and \textsc{Ccall}, correspond directly to the syntactic constructs of the choreography.
Rule \textsc{Crun} executes a method on a specific process $p$, updating the global state $g$ at $p$ via the interpretation function $I_p$.
Rule \textsc{Ccom} models directed communication; it updates the state of the receiver $q$ using the update function $u$ and the value computed from the sender $p$'s state.
Rule \textsc{Cbcast} evaluates a function on the local state of a process $p$, and then chooses how to continue the rest of the choreography based on the result.
Rule \textsc{Ccall} evaluates a call to a procedure $f$ by inserting its body into the choreography. The rule makes use of an operator $\seq$ which sequences two choreographies and which is defined in \cref{fig:seq}.
The operator $\seq$ is defined by recursion on its left argument, in a similar fashion to the standard definition of \lstinline{append} for linked lists.
Note that our semantics makes procedure calls \emph{synchronized}, i.e.
all processes are assumed to enter the procedure together. It is possible to model asynchronicity by introducing intermediate states in which only a subset of processes has entered the procedure~\cite{Montesi23}, but this is orthogonal to our use of state transformers, so
we chose to take the simpler synchronous version.

The two delay rules, \textsc{CdelayRun} and \textsc{CdelayCom}, introduce nondeterminism into the semantics.
Although the syntax of choreographies is sequential, actions involving disjoint sets of processes are independent when executed on a network.
Nondeterminism is hence needed to match the choreography semantics with the network semantics, where nondeterminism is inherent~\cite{CruzFilipeMP23}.
For example, in the program $\eff{p}{e}; \eff{q}{f}; \done{}$, if $p \neq q$, then either $p$ or $q$ could execute its method first.
To generalize this, we determine whether the processes of two operations are disjoint through a $\mathsf{disjoint}(l_1, l_2)$ predicate on labels. It is satisfied if the set of roles involved in $l_1$ does not intersect with the set of roles in $l_2$. Specifically, $\mathsf{disjoint}(\effl{p}, \effl{q})$ holds if $p \neq q$, and $\mathsf{disjoint}(\sendl{p}{q}, \effl{r})$ holds if $p \neq r \land q \neq r$, and analogously for the other combinations of methods and communications. Other combinations of labels are not considered disjoint.

\begin{figure}
\begin{mathpar}
\inferrule[Crun]{g' = g[p \mapsto I_p(e, g_p)]}
  {\stepC{\eff{p}{e}; c}{g}{\effl{p}}{c}{g'}}

\inferrule[Ccom]{g' = g[q \mapsto u(g_q, v(g_p))]}
  {\stepC{\com{p}{q}{v}{u}; c}{g}{\sendl{p}{q}}{c}{g'}}

\inferrule[CdelayRun]{\mathsf{disjoint}(\effl{p}, l) \quad \stepC{c}{g}{l}{c'}{g'}}
  {\stepC{\eff{p}{e}; c}{g}{l}{\eff{p}{e}; c'}{g'}}

\inferrule[CdelayCom]{\mathsf{disjoint}(\sendl{p}{q}, l) \quad \stepC{c}{g}{l}{c'}{g'}}
  {\stepC{\com{p}{q}{v}{u}; c}{g}{l}{\com{p}{q}{v}{u}; c'}{g'}}

\inferrule[Cbcast]{ }
  {\stepC{\switch{p}{v}; c}{g}{\broadl{p}}{c(v(g_p))}{g}}

\inferrule[Ccall]{ }
  {\stepC{\call{f}; c}{g}{\calll}{\pden(f) \seq c}{g}}
\end{mathpar}

\caption{Operational Semantics of Choreographies}
\label{fig:sem-choreographies}
\end{figure}

\begin{figure}
\[\begin{array}{rcl}
(\cdot \seq \cdot) &:& \chor{} \to \chor{} \to \chor{}
\\[.2em]
(\eff{p}{e}; c) \seq c' &\triangleq& \eff{p}{e}; (c \seq c') \\
(\com{p}{q}{v}{u}; c) \seq c' &\triangleq& \com{p}{q}{v}{u}; (c \seq c') \\
(\switch{p}{v}; c) \seq c' &\triangleq& \switch{p}{v}; (\lambda m.~(c~m) \seq c') \\
(\call{f}; c) \seq c' &\triangleq& \call{f}; (c \seq c') \\
\done{} \seq c' &\triangleq& c' \\[1em]
\end{array}\]
\caption{Sequencing of choreographies.}
\label{fig:seq}
\end{figure}

\paragraph{Networks.}

We define the operational semantics on networks in \cref{fig:sem-networks}.
It is defined on configurations of the form $\stat{n}{g}$,
where $n$ is a network and $g$ is a global state (if you prefer, a configuration is isomorphic to a map from each role to a pair of a process and a local state).
The rules for methods and communication resemble the rules for choreographies, except we now
perform steps on arbitrary processes in the network.
Note that the rules are directly defined on a network, given as a function from roles to processes.
This is in contrast to the approach taken by \citet{CruzFilipeMP23}, where networks form a partial commutative monoid, allowing them to be composed in parallel. We find our representation more suitable
for theorem proving, as we avoid the known difficulties of partial operations and furthermore do not need a congruence rule as in \citet{CruzFilipeMP23}'s approach.

The rule \textsc{Nrun} advances a single process $p$ by executing a local method and updating the network mapping $n$ so that $p$ proceeds to its continuation $b$.
The rule \textsc{Ncom} models communication: a step occurs only when a sender $p$ (ready to \procColor{send}) and a receiver $q$ (ready to \procColor{recv}) are both at the head of their respective processes. The rule updates the receiver's state and advances both processes simultaneously.
In rule \textsc{Nbcast}, process $p$ must $\procColor{shout}$ a value.
Also, \emph{every} other process $q$ in the network must be ready to \procColor{listen} for a message from $p$ (this is ensured by the function $f$, which also returns each process's continuation).
The transition updates the entire network $n$ such that $p$ and all receiving processes $q$ advance to their respective continuations based on the broadcast value $v(g_p)$.
Note that $(f (q, h)).1$ evaluates to the continuation $b'$ of a given process.
Finally, in rule \textsc{Ncall}, every process has to be in a state where it wants to call
the procedure $x$ next. Then, all processes enter the procedure. Just like in the case of choreographies, we make use of a sequencing operation. It is defined in Figure~\ref{fig:seqp},
following the same recursive approach as sequencing for choreographies.
The projected procedure environment used in \textsc{Ncall} is defined as $\mathcal{P}_p(f) := \proj{\pden(f)}_p$.

\begin{figure}
\begin{mathpar}

\inferrule[Nrun]{g' = g[p \mapsto I_p(e, g_p)] \\ n (p) = \effp {e}; {b} \\ m = n[p \mapsto b]}
  {\stepN{n}{g}{\effl{p}}{m}{g'}}

\inferrule[Ncall]{\forall p, n(p) = \callp{f}; b(p) \\
\forall p, m(p) = \mathcal{P}_p(f) \seqp b(p)}
  {\stepN{n}{g}{\calll}{m}{g}}

\inferrule[Ncom]{n (p) = \send {q} {v}; b \\ n (q) = \recv {p} {u}; c \\ m = n[p \mapsto b][q \mapsto c] \\ g' = g[q \mapsto u(g_q, v(g_p))]}
  {\stepN{n}{g}{\sendl{p}{q}}{m}{g'}}

\inferrule[Nbcast]{n (p) = \broad {v}; {b} \\ f : (q : \mathsf{Role}) \times (q \neq p) \to \{ b' \mid n (q) = \switchp {p}; {b'} \} \\ m (p) = b (v(g_p)) \\ \forall q, (h : q \neq p) \Rightarrow m (q) = (f (q, h)).1~v(g_p)}
  {\stepN{n}{g}{\broadl{p}}{m}{g}}

\end{mathpar}

\caption{Operational Semantics of Networks}
\label{fig:sem-networks}
\end{figure}

\begin{figure}
\[\begin{array}{rcl}
(\cdot \seqp \cdot) &:& \proc{p} \to \proc{p} \to \proc{p}
\\[.2em]
(\send{q}{v}; b) \seqp b' &\triangleq& \send{q}{v}; (b \seqp b') \\
(\recv{q}{u}; b) \seqp b' &\triangleq& \recv{q}{u}; (b \seqp b') \\
(\switchp{q}; b) \seqp b' &\triangleq& \switchp{q}; (\lambda m.~(b~m) \seqp b') \\
(\broad{v}; b) \seqp b' &\triangleq& \broad{v}; (\lambda m.~(b~m) \seqp b') \\
(\effp{e}; b) \seqp b' &\triangleq& \effp{e}; (b \seqp b') \\
(\callp{f}; b) \seqp b' &\triangleq& \callp{f}; (b \seqp b') \\
\donep{} \seqp b'&\triangleq& b'
\end{array}\]
\caption{Sequencing of processes.}
\label{fig:seqp}
\end{figure}

\subsection{Correctness Proofs}

Now that we have formalized the semantics of choreographies and networks as well as endpoint projection, we want to show that endpoint projection preserves semantics.
Furthermore, while arbitrary networks can contain arbitrary deadlocks, we prove that \emph{projected networks} are free from deadlocks, where a network $n$ is a projected network if there exists a choreography $c$ such that $n = \proj{c}$.

We proceed by first proving completeness (each choreography step is simulated by a network step) and soundness (each network step is simulated by a choreography step) of endpoint projection.
This ensures the semantics of choreographies and their projected networks match each other.
We then prove progress for choreographies (every choreography is either done or can do a step), which
implies progress for projected networks, by completeness.
Finally, we prove deadlock freedom as follows:
Network progress means a projected network $n$ is either done or can perform a step to a network $n'$. By soundness,
we have that $n'$ is itself a projected network, so progress applies again. By induction, we conclude that after an arbitrary number of steps, the network is either done or can perform yet another step. This is the deadlock freedom property. 

We begin with completeness. Given choreographies $c$, $c'$, the completeness theorem states that any step from $c$ to $c'$
corresponds to a step between the network projections \proj{c} and
\proj{c'}.

\begin{theorem}[Completeness]
\label{thm-completeness}
For all global states $g$, $g'$, choreographies $c$, $c'$, and labels $l$, if the choreography can step from $c$ to $c'$ with label $l$ and state change $g$ to $g'$, that is,
$\stepC{c}{g}{l}{c'}{g'}$, then the projected network can make a corresponding step
$\stepN{\proj{c}}{g}{l}{\proj{c'}}{g'}$.

\end{theorem}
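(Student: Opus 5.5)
We will prove the statement by induction on the derivation of $\stepC{c}{g}{l}{c'}{g'}$, with one case per rule of \cref{fig:sem-choreographies}. In each case the network step is obtained by picking the matching network rule of \cref{fig:sem-networks}. We then check its premises pointwise by unfolding the definition of $\proj{\cdot}$ in \cref{fig:formepp}, which splits on whether the role under consideration equals $p$, $q$, or neither. Since networks are functions, the resulting network will usually be a function update such as $\proj{c}[p \mapsto b]$. So in each case we also prove an extensional equality of networks (function extensionality plus a case split on role equality) showing that this update equals $\proj{c'}$.

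\textbf{Base cases.} For \textsc{Crun}, $\proj{\eff{p}{e}; c}_p = \effp{e}; \proj{c}_p$, so \textsc{Nrun} applies. The resulting network $\proj{\eff{p}{e};c}[p \mapsto \proj{c}_p]$ equals $\proj{c}$, because for $r \neq p$ the projection already skips the $\mathsf{run}$. For \textsc{Ccom}, $p \neq q$ gives $\send{q}{v}$ at $p$ and $\recv{p}{u}$ at $q$, so \textsc{Ncom} applies, and the same extensionality argument closes the case. The state updates coincide syntactically. For \textsc{Cbcast}, $p$ has $\broad{v}; \lambda m.\proj{c(m)}_p$ and every $r \neq p$ has $\switchp{p}; \lambda m.\proj{c(m)}_r$. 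We take $f(r,h)$ to be the continuation $\lambda m.\proj{c(m)}_r$ together with the proof obtained by rewriting the \texttt{if} using $h$; then \textsc{Nbcast} yields exactly $\proj{c(v(g_p))}$. The message type $m$ of the $\mathsf{bcast}$ and of the $\mathsf{listen}$ agree by construction, so no cast is needed. For \textsc{Ccall}, every role projects to $\callp{f}; \proj{c}_r$, so \textsc{Ncall} gives the network $r \mapsto \proj{\pden(f)}_r \seqp \proj{c}_r$. To identify this with $\proj{\pden(f) \seq c}$ we need the auxiliary lemma that projection commutes with sequencing, $\proj{c_1 \seq c_2}_r = \proj{c_1}_r \seqp \proj{c_2}_r$. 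We prove this lemma separately by structural induction on $c_1$; the $\mathsf{bcast}$ case uses function extensionality under the $\lambda m$.

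\textbf{Delay cases.} For \textsc{CdelayRun}, the induction hypothesis gives $\stepN{\proj{c}}{g}{l}{\proj{c'}}{g'}$, where $l$ is disjoint from $\effl{p}$. The networks $\proj{\eff{p}{e};c}$ and $\proj{c}$ differ only at $p$, where an extra $\effp{e}$ is prefixed. We therefore need a frame lemma: if $\stepN{n}{g}{l}{n'}{g'}$ and $p$ is not involved in $l$ (so $n(p) = n'(p)$), then for any prefixing operation $F$ we have $\stepN{n[p \mapsto F(n(p))]}{g}{l}{n'[p \mapsto F(n'(p))]}{g'}$. We prove this by inversion on the network step. Disjointness excludes $\broadl{p'}$ and $\calll$ for $l$, because these labels are never disjoint from anything, and this is what makes the frame lemma go through: those rules constrain every process. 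For $\effl{r}$ and $\sendl{r}{s}$ with $r,s \neq p$, the premises only inspect $n(r)$ and $n(s)$, which are unchanged. \textsc{CdelayCom} is handled the same way with two roles $p, q$, by applying the frame lemma twice: first prefixing $\send{q}{v}$ at $p$, then $\recv{p}{u}$ at $q$.

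We expect the main obstacle to be the delay cases. The difficulty is not conceptual but lies in stating the frame lemma so that it commutes cleanly with function updates, $n[p \mapsto \cdot][r \mapsto \cdot] = n[r \mapsto \cdot][p \mapsto \cdot]$ for $p \neq r$. In Lean this means handling dependent function updates over $\mathsf{Proc}_p$ and the decidable-equality rewriting they involve. Our first attempt will be to phrase the frame lemma as directly as possible on the concrete labels $\effl{r}$ and $\sendl{r}{s}$, rather than generically, and to discharge the resulting network equalities with \texttt{funext} followed by case analysis on role equality.
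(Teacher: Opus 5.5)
Your proposal is correct and follows the paper's proof. Both argue by induction on the derivation of the choreography step, match each rule to its network counterpart, and close the \textsc{Ccall} case with the lemma that projection distributes over sequencing. Your frame lemma for the delay cases simply makes explicit what the paper summarizes as the disjointness conditions ensuring that the delayed action does not interfere with the simulated network step.
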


\begin{proof}
By induction on the derivation of the step $\stepC{c}{g}{l}{c'}{g'}$:

\begin{itemize}
\item \textsc{Crun}: When the choreography executes a method at process $p$, this corresponds directly to a method step \textsc{Nrun} in the network projection at the same process.
\item \textsc{Ccom}: Sending a message from process $p$ to $q$ in the choreography corresponds to a sending and receiving step \textsc{Ncom} between the same processes in the network projection. 
\item \textsc{CdelayRun} and \textsc{CdelayCom}:
The derivation of a delay step recursively contains another step. By the induction hypothesis, such a step can be mapped to a network step. The disjointness conditions ensure that the delayed actions do not interfere with the simulated step on the network projection.
\item \textsc{Cbcast}: Broadcast steps in the choreography correspond to broadcast steps in the network \textsc{Nbcast} after projection.
\item \textsc{Ccall}: Call steps in the choreography correspond to
call steps in the network \textsc{Ncall} after projection.
\qedhere
\end{itemize}
\end{proof}

In the case for \textsc{Ccall},
applying the \textsc{Ncall} step involves proving that its second precondition is fulfilled,
which in this case amounts to showing that, for any role $p$,
\[ \proj{\pden(x) \seq c}_p = \proj{\pden(x)}_p \seqp \proj{c}_p. \]
This is an immediate consequence of the following lemma.
\begin{lemma}[Projection distributes over composition]
For choreographies $c, c'$ and role $p$, we have $\proj{c \seq c'}_p = \proj{c}_p \seqp \proj{c'}_p.$
\end{lemma}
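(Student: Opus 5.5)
We prove the equation by structural induction on the left choreography $c$, with $c'$ and $p$ fixed. The induction hypothesis must be stated for all roles $p$, or at least for the fixed one. In the $\switch{}{}$ case the continuation is a function $m \to \chor{}$, so the induction principle for $\chor{}$ gives a hypothesis for every $c(m)$. In each case we unfold the definition of $\seq$ (\cref{fig:seq}), then the projection (\cref{fig:formepp}), then $\seqp$ (\cref{fig:seqp}), and close the goal with the induction hypothesis.

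\emph{Base case.} For $c = \done{}$ the left side is $\proj{\done{} \seq c'}_p = \proj{c'}_p$. The right side is $\donep \seqp \proj{c'}_p = \proj{c'}_p$. Both are immediate by definition.

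\emph{Run, com and call.} For $c = \eff{q}{e}; c_0$ the left side is $\proj{\eff{q}{e}; (c_0 \seq c')}_p$. We case split on whether $p = q$, using the decidable equality on $\mathsf{Role}$.
\begin{itemize}
\item If $p = q$, both sides reduce to $\effp{e}; (\cdots)$, and the inner parts agree by the induction hypothesis.
\item If $p \neq q$, both sides reduce to $\proj{c_0 \seq c'}_p$ and $\proj{c_0}_p \seqp \proj{c'}_p$ respectively, which are equal by the induction hypothesis.
\end{itemize}
For $\com{q}{r}{v}{u}; c_0$ the split is three-way: $p = q$, $p = r$, or neither. In each branch both sides unfold to the same outer constructor or to no constructor, and the induction hypothesis closes the goal. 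For $\call{f}; c_0$ there is no split, and the same unfolding argument applies.

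\emph{Broadcast.} For $c = \switch{q}{v}; k$ the left side projects to $\broad{v}; \lambda m.\,\proj{k(m) \seq c'}_p$ when $p = q$, or to $\switchp{q}; \lambda m.\,\proj{k(m) \seq c'}_p$ otherwise. The right side becomes $\broad{v}; \lambda m.\,(\proj{k(m)}_p \seqp \proj{c'}_p)$, or the $\switchp{q}$ variant. The two lambdas agree pointwise by the induction hypothesis for each $k(m)$. We therefore need function extensionality (\lstinline{funext} in Lean) before applying congruence of the constructor.

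\emph{Main obstacle.} I expect the only friction to be technical rather than conceptual. The case splits in the projection are on \lstinline{if p = q} with dependent decidability instances. The $r = p$ branches may also require rewriting along the equality so that the intrinsically typed $v$ and $u$ (typed over $\mathsf{LState}_p$) line up. I would handle this with \lstinline{split}/\lstinline{by_cases}, followed by \lstinline{subst} and \lstinline{simp} using the projection and sequencing equations. Once the dependent-type coercions match on both sides, each goal is closed by the induction hypothesis together with \lstinline{funext}.
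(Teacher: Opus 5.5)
Your proposal is correct and takes the same approach as the paper, which proves the lemma simply ``by induction on $c$''. Your case analysis fills in that argument faithfully: role splits for $\eff{q}{e}$ and $\com{q}{r}{v}{u}$, the pointwise hypothesis plus function extensionality for the $\switch{q}{v}$ continuation, and casts along the role equality.
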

\begin{proof}
By induction on $c$.
\end{proof}

Having proven completeness, we now show the converse:
The soundness theorem states that any step taken by a projected network corresponds to a valid step in the choreography semantics.
While completeness maps choreography steps to network steps, soundness cannot map
arbitrary network steps to choreography steps, since the network semantics allows for more general behavior. Hence, we require that the network is the result of projecting a choreography. In the proof, for any network step, we need to inductively traverse this choreography to find the corresponding choreography step.

\begin{theorem}[Soundness]
For any global states $g$, $g'$, choreography $c$, network $n$, and label $l$, if the network starting from the projection of $c$ can step to $n$ with label $l$ and state change $g$ to $g'$, i.e.,
$\stepN{\proj{c}}{g}{l}{n}{g'}$, then there exists a choreography $c'$ such that $n = \proj{c'}$ and the choreography can step correspondingly: $\stepC{c}{g}{l}{c'}{g'}$.

\end{theorem}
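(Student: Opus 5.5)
We prove the statement by induction on the choreography $c$, generalizing over $n$, $g$, $g'$ and $l$. In each case we invert the network step $\stepN{\proj{c}}{g}{l}{n}{g'}$. The label $l$ and the shape of $\proj{c}$ at the roles the label names tell us which network rule fired. We then either fire the matching head rule of the choreography or use a delay rule together with the induction hypothesis.

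The cases where a head rule applies are direct.
\begin{itemize}
\item If $c = \done{}$, every process is $\donep$, so no rule applies and the case is vacuous.
\item If $c = \call{f}; c_0$, every projection starts with $\callp{f}$, so only \textsc{Ncall} can fire and $l = \calll$. Take $c' = \pden(f) \seq c_0$. The lemma that projection distributes over composition gives $\proj{c'}_p = \mathcal{P}_p(f) \seqp \proj{c_0}_p$, which is exactly the $m$ prescribed by \textsc{Ncall}; function extensionality gives $n = \proj{c'}$, and \textsc{Ccall} gives the choreography step.
\item If $c = \switch{p}{v}; k$, every role's projection starts with $\procColor{shout}$ or $\procColor{listen}_p$, so only \textsc{Nbcast} at $p$ can fire. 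Take $c' = k(v(g_p))$; the continuations supplied by the witness $f$ are forced to be $\lambda m.\proj{k(m)}_q$, so $n = \proj{c'}$ pointwise, and \textsc{Cbcast} applies.
\end{itemize}

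If $c = \eff{p}{e}; c_0$, we case on the label.
\begin{itemize}
\item If $l = \effl{p}$, only $p$ can have fired. Since $p$'s own head is $\effp{e}$, \textsc{Crun} applies with $c' = c_0$, and $n = \proj{c_0}$.
\item Otherwise the step involves only roles other than $p$, or it is \textsc{Ncall} or \textsc{Nbcast}. Neither of the latter can fire, because $p$'s head is $\effp{e}$, not $\procColor{call}$ or $\procColor{listen}$. So $l$ is disjoint from $\effl{p}$. Roles other than $p$ have $\proj{c}_r = \proj{c_0}_r$, so we transport the step to $\stepN{\proj{c_0}}{g}{l}{n_0}{g'}$, where $n_0 = n[p \mapsto \proj{c_0}_p]$. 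The induction hypothesis yields $c_0'$ with $n_0 = \proj{c_0'}$ and a step from $c_0$. Set $c' = \eff{p}{e}; c_0'$; then \textsc{CdelayRun} gives the step, and $n = \proj{c'}$ because $p$'s entry was untouched.
\end{itemize}

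The case $c = \com{p}{q}{v}{u}; c_0$ is analogous: \textsc{Ncom} on $p,q$ gives the head step \textsc{Ccom}, and everything else is a disjoint step handled by \textsc{CdelayCom}.

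The main obstacle is the transport lemma in the delay cases. It says: if $n_1$ and $n_2$ agree on every role touched by $l$, and $\stepN{n_1}{g}{l}{m_1}{g'}$, then $\stepN{n_2}{g}{l}{m_1[\ldots]}{g'}$, with untouched roles keeping their $n_2$ entries. We prove it once by cases on the network step, and also use it to show that $p$ (resp.\ $p,q$) is not touched.

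A subtler point is \textsc{Nbcast} or \textsc{Ncall} arising under a $\mathsf{run}_p$ prefix. These are excluded because they need \emph{every} role to be at a $\procColor{listen}$ or $\procColor{call}$, whereas $p$ is at $\effp{e}$; the exclusion relies on $\mathsf{Role}$ containing that $p$. Finally, the dependent typing of $f$ in \textsc{Nbcast} and the heterogeneous message types in $\procColor{send}$ and $\procColor{recv}$ require care when inverting equalities like $\send{q}{v}; b = \send{q}{v'}; b'$. These give only heterogeneous equalities of $v$ and $v'$, which we must eliminate, for instance by injectivity of constructors in Lean.
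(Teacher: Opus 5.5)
Your proposal is correct and follows the paper's proof. Both argue by structural induction on $c$: \textsc{Cbcast} and \textsc{Ccall} steps are forced by the shape of the projected network (the latter using that projection distributes over $\seq$), and a $\mathsf{run}$ or $\mathsf{com}$ head either fires directly or is skipped with a delay rule plus the induction hypothesis. Your transport lemma and your remarks on heterogeneous equalities only make explicit what the paper's sketch leaves implicit. One further point to state: the $\done{}$ case is vacuous only because $\mathsf{Role}$ is inhabited, since otherwise \textsc{Ncall} would fire vacuously from $\proj{\done{}}$.
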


\begin{proof}
By induction on the structure of the choreography $c$:

\begin{itemize}
\item $\eff{p}{e}; c'$ or $\com{p}{q}{v}{u}; c'$: We check if the network step matches the operation. If so, we know that the first operation of the choreography is executed, and return the corresponding choreography step (\textsc{Crun} or \textsc{Ccom}, respectively). Otherwise, we apply one of the delay rules to skip the current operation. We then inductively search the rest of the choreography ($c'$) for an operation that matches the network step. 
\item $\switch{p}{v}; c'$: If the first operation is a broadcast, then all projected processes must either have \procColor{shout} or \procColor{listen} as their first operation. Therefore, the network step can only be a broadcast step \textsc{Nbcast}. We map it to the corresponding choreography step \textsc{Cbcast}.
\item $\call{f}; c'$: Similarly, if the first operation is a procedure call, then all projected processes must have the message call as their first operation. Therefore, the network step can only be \textsc{Ncall}, which we map to \textsc{Ccall}.
\item $\done{}$: No step can be done, so the property holds vacuously.
\qedhere
\end{itemize}

\end{proof}

We have shown the correctness of endpoint projection with respect to the operational semantics.
We are now left to show progress and deadlock freedom.

The progress lemma for choreographies asserts that any choreography is either finished ($\done{}$) or can perform a step in the operational semantics.

\begin{lemma}[Choreography Progress]
\label{thm-chor-progress}
For any choreography $c$ and global state $g$:
\[(c = \done{}) \lor (\exists c', l, g'.~\stepC{c}{g}{l}{c'}{g'})\]
In words: Either the choreography $c$ is already terminated, or there exists a label $l$, a next choreography $c'$, and an updated global state $g'$, such that $\stat{c}{g}$ can take an $l$-labeled step to $\stat{c'}{g'}$.
\end{lemma}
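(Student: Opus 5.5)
The plan is a simple case analysis on the outermost constructor of $c$. No induction is needed, because every non-terminal constructor has an applicable head rule that fires unconditionally. In each case we construct the witnesses $c'$, $l$, $g'$ explicitly and take the right disjunct. The only exception is $\done{}$, where we take the left disjunct by reflexivity.

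The cases go as follows.
\begin{itemize}
\item For $\eff{p}{e}; c_0$, choose $c' = c_0$, $l = \effl{p}$ and $g' = g[p \mapsto I_p(e, g_p)]$; rule \textsc{Crun} applies with its premise holding by definition.
\item For $\com{p}{q}{v}{u}; c_0$, choose $c' = c_0$, $l = \sendl{p}{q}$ and $g' = g[q \mapsto u(g_q, v(g_p))]$; rule \textsc{Ccom} applies.
\item For $\switch{p}{v}; c_0$, where $c_0 : m \to \chor{}$, choose $c' = c_0(v(g_p))$, $l = \broadl{p}$ and $g' = g$; rule \textsc{Cbcast} has no premises.
\item For $\call{f}; c_0$, choose $c' = \pden(f) \seq c_0$, $l = \calll$ and $g' = g$; rule \textsc{Ccall} applies, since $\pden$ is a total function.
\end{itemize}
The delay rules are never needed here, since they only serve to generate additional, nondeterministic steps.

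No real obstacle is expected. The only points of mild care arise in Lean. First, the global-state update $g[p \mapsto x]$ is a dependent update on $(p : \mathsf{Role}) \to \mathsf{LState}_p$, so it relies on the decidable equality of $\mathsf{Role}$ already provided by the signature. Second, in the $\chorColor{com}$ case the constructor carries the proof $p \neq q$, which simply has to be threaded through unchanged. Because the message type $m$ is an implicit argument of $\chorColor{com}$ and $\chorColor{bcast}$, the existential witnesses must be stated with the correct $m$, but pattern matching on $c$ fixes this automatically.
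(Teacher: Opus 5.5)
Your proposal is correct and matches the paper's proof: a case distinction on the head constructor of $c$, firing \textsc{Crun}, \textsc{Ccom}, \textsc{Cbcast}, or \textsc{Ccall} on the first operation, and taking the left disjunct for $\done{}$. The explicit witnesses you give for $c'$, $l$, and $g'$ in each case are the right ones.
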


\begin{proof}
By case distinction on the structure of the choreography:
If the choreography has form $\com{p}{q}{v}{u};c $, $\switch{p}{v}; c$, $\eff{p}{e}; c$, or $\call{f}; c$,
there is an obvious corresponding step in the operational semantics that executes the first
operation in the choreography.
If the choreography is $\done{}$, the theorem holds trivially.
\end{proof}

The progress lemma for the network semantics guarantees that the projection of any choreography either has all processes terminated or can make a network step according to the operational semantics.

\begin{lemma}[Network Progress]
\label{thm-net-progress}
For any global state $g$ and choreography $c$:
\[(\forall p.~\proj{c}_p = \donep{}) \lor (\exists c', l, g'.~\stepN{\proj{c}}{g}{l}{\proj{c'}}{g'})\]

\noindent
That is, either every process in the projected network is $\donep{}$, or there exists a network step labeled $l$ from $\proj{c}$ at state $g$ to the projected next choreography at the updated global state $g'$.
\end{lemma}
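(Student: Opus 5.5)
The plan is to derive Network Progress directly from Choreography Progress (Lemma~\ref{thm-chor-progress}) combined with Completeness (Theorem~\ref{thm-completeness}), so no new induction over the semantics is needed. Fix $g$ and $c$, and apply Lemma~\ref{thm-chor-progress} to $c$ and $g$. This gives two cases, which map onto the two disjuncts of the statement.

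In the first case, $c = \done{}$. Unfolding the definition of projection in \cref{fig:formepp}, we get $\proj{\done}_p = \donep$ for every role $p$. This is exactly the left disjunct. It should close by rewriting with the equation and then by reflexivity, pointwise in $p$.

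In the second case, we obtain $c'$, $l$ and $g'$ with $\stepC{c}{g}{l}{c'}{g'}$. Applying Theorem~\ref{thm-completeness} to this step yields $\stepN{\proj{c}}{g}{l}{\proj{c'}}{g'}$. The same witnesses $c'$, $l$ and $g'$ then establish the right disjunct. Note that the statement asks for the target network to be literally $\proj{c'}$, which is precisely the form completeness delivers. This is why we route through completeness rather than building a network step by hand.

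I expect essentially no obstacle here. The only point needing care is that in the Lean mechanization the left disjunct must be stated extensionally, for all $p$, rather than as an equality of networks. Pointwise reduction of the projection function on $\done{}$ handles this; if needed, function extensionality bridges the two forms. All the real work, notably the \textsc{Ccall} case through the lemma that projection distributes over $\seq$, is already contained in Theorem~\ref{thm-completeness}.
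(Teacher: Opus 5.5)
Your proposal is correct and matches the paper's proof: case split on Choreography Progress (\cref{thm-chor-progress}), close the $\done{}$ case by unfolding the projection, and turn the choreography step into a network step with Completeness (\cref{thm-completeness}). You also rightly observe that Completeness already gives the target network in the required form $\proj{c'}$.
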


\begin{proof}
The result follows from progress (\cref{thm-chor-progress}) and completeness (\cref{thm-completeness}) for choreographies:
\begin{itemize}
\item By the \emph{progress} lemma (\cref{thm-chor-progress}), the choreography $c$ is either done or can make a step from $c$ to $c'$ labeled $l$ from state $g$ to $g'$.
\item If $c$ is done, then by definition all processes in the projected network are also done.
\item If $c$ can make a step, then by \emph{completeness} (\cref{thm-completeness}), the projected network \proj{c} can make a corresponding network step.
\item Thus, the network semantics either terminates or can make progress in lockstep with the choreography semantics.
\qedhere
\end{itemize}
\end{proof}

The deadlock freedom theorem states that any reachable state of the network semantics, derived from a choreography, is either terminated or can make a further step; hence, no deadlock occurs.
For this theorem, we use the relation $\stepNS{n}{g}{n'}{g'}$, which is the reflexive-transitive closure of the network step relation.

\begin{theorem}[Deadlock Freedom]
For any global states $g$, $g'$, choreography $c$, and network state $n$ such that the projected choreography \proj{c} steps star-wise (zero or more steps) to $n$ at state $g'$:
\[\begin{array}{l}
(\stepNS{\proj{c}}{g}{n}{g'}) \Rightarrow \\
\quad(\forall x.~n~x = \donep{}) \lor ( \exists c', l, g'.~\stepN{n}{g}{l}{\proj{c'}}{g'})
\end{array}\]

\noindent
In words: From any reachable network state $n$ after zero or more steps, either all processes are done or the network can make a further step; thus, the network semantics is deadlock-free.
\end{theorem}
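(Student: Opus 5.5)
Note first that the statement reuses the names $g'$ and $c'$ in the conclusion. I read the conclusion as: $n$ can step from the reached state $g'$ to some projected network $\proj{c''}$ with a fresh state $g''$, i.e.\ $\exists c'', l, g''.~\stepN{n}{g'}{l}{\proj{c''}}{g''}$. The plan is to prove a stronger invariant: every configuration reachable from $\stat{\proj{c}}{g}$ is of the form $\stat{\proj{c_0}}{g'}$ for some choreography $c_0$. Deadlock freedom then follows by applying Network Progress (\cref{thm-net-progress}) to $c_0$ and $g'$.

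\textbf{Invariant.} For all $c$, $g$, $n$, $g'$: if $\stepNS{\proj{c}}{g}{n}{g'}$ then there is a $c_0$ with $n = \proj{c_0}$. I prove this by induction on the derivation of the reflexive-transitive closure, generalising over $c$.
\begin{itemize}
\item In the reflexive case take $c_0 = c$.
\item In the step case we have $\stepN{\proj{c}}{g}{l}{n_1}{g_1}$ followed by $\stepNS{n_1}{g_1}{n}{g'}$. The Soundness theorem applied to the first step gives a $c_1$ with $n_1 = \proj{c_1}$, and also a matching choreography step, which we do not need.
\item Rewriting $n_1$ to $\proj{c_1}$, the induction hypothesis applies to the remaining multi-step derivation and yields $c_0$.
\end{itemize}
If the closure is defined with the single step on the right instead, the argument is the same: induct on the prefix to get $\proj{c_1}$, then apply Soundness to the last step.

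\textbf{Conclusion.} With $n = \proj{c_0}$, Network Progress for $c_0$ at state $g'$ gives one of two cases.
\begin{itemize}
\item Either $\forall x.~\proj{c_0}_x = \donep{}$, which is the left disjunct.
\item Or $\stepN{\proj{c_0}}{g'}{l}{\proj{c''}}{g''}$ for some $c''$, $l$, $g''$, which is the right disjunct after rewriting back along $n = \proj{c_0}$.
\end{itemize}

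\textbf{Expected difficulty.} The argument itself is short. The main obstacle is a mechanical one: setting up the induction on the star relation correctly. The starting network is the non-variable term $\proj{c}$, so in Lean I would state the invariant for an arbitrary network $n_0$ together with a hypothesis $n_0 = \proj{c}$, and generalise $c$ before inducting. Otherwise the index unification fails.

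There is a related subtlety because networks are functions. The equation $n = \proj{c_0}$ is an equality of dependent functions, so rewriting may need the Soundness output phrased exactly as an equation of networks rather than pointwise. If Soundness only gives a pointwise statement, I would convert it with function extensionality.
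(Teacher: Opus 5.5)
Your proof is correct and matches the paper's: it also inducts over the network star relation, using Soundness to keep the reached network a projection and Network Progress for the final disjunction. The only difference is that the paper folds the progress disjunction into the inductive invariant, while you apply Network Progress once at the end. Note that the statement as literally written has $n$ stepping from the initial state $g$. Since Network Progress holds for every global state, your argument also covers that reading by instantiating it at $g$ instead of $g'$.
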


\begin{proof}
We strengthen the conclusion to the following invariant, which we prove to hold by induction over the series of network steps $(\stepNS{\proj{c}}{g}{n}{g'})$:
\[\begin{array}{l}
(\exists c.~(\proj{c} = n)) \land \\
\quad( (\forall x.~n~x = \donep{}) \lor \exists c', l, g'.~\stepN{n}{g}{l}{\proj{c'}}{g'}).
\end{array}\]
For this, it is necessary to show that (i) a projected choreography satisfies the invariant, and (ii) the invariant is preserved by a network step.
Part (i) follows from network progress, while (ii) follows from network progress and soundness.
\end{proof}

\subsection{Interpreter}
\label{sec:interp}

In this subsection, we prove partial correctness of a deterministic interpreter for choreographies. More precisely, if the interpreter successfully
returns a result, then this result is reachable in the operational semantics.
The interpreter we verify is defined as follows:

\begin{lstlisting}
def Chor.interp (d : sig.Func → Chor) (g : GState) : Chor → Nat → Option GState
| run p e c, n+1 =>
  c.interp d g[p ↦ sig.I p e (g p)] n
| com p q _h v u c, n+1 =>
  c.interp d g[q ↦ u (g q) (v (g p))] n
| bcast p v c, n+1 => (c (v (g p))).interp d g n
| call x c, n+1 => (d x).interp d g n >>= (c.interp d . n)
| done, _+1 => g
| _, 0 => .none
\end{lstlisting}

Because our choreographic language allows nonterminating programs, our interpreter is necessarily non-total.
We use a standard \emph{fuel}-based approach~\cite{AppelM01, Owens16, Siek13, Amin17} to encode this
fact in Lean\footnote{We only use fuel for the interpreter: The operational semantics does not need it, as it is are based on a relational, small-step approach, and the Hoare logic does not need it as it only proves \emph{partial} correctness}.
The interpreter's partiality is indicated by its \lstinline{Option} return type. It takes a fuel argument of type \lstinline{Nat} setting a limit on the recursion depth.
The fuel argument ensures that the function is structurally recursive, despite the nested recursion in the \lstinline{call} case, which would not be well-founded otherwise.
While we previously defined a simpler interpreter in Section~\ref{s:int}, the version
shown here is more suitable for formal reasoning, as it supports proofs by natural induction.

To begin, we need some lemmas about sequencing.
We first prove that sequencing is associative. We will need this as the semantics
for procedure calls introduces the sequencing operator into our programs.

\begin{lemma}[Associativity of Sequencing]
\label{lem:assoc}
For choreographies $c, d, e$, we have
\[ (c \seq d) \seq e = c \seq (d \seq e). \]
\end{lemma}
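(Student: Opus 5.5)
We prove the statement by structural induction on $c$, keeping $d$ and $e$ arbitrary. Because $\seq$ is defined by recursion on its left argument (\cref{fig:seq}), each constructor of $c$ lets both sides unfold by definition. The argument is the same as the one for associativity of list append.

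\textbf{Base case.} For $c = \done{}$ both sides reduce to $d \seq e$: the left side is $(\done{} \seq d) \seq e = d \seq e$, and the right side is $\done{} \seq (d \seq e) = d \seq e$. This holds by computation, so no induction hypothesis is needed.

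\textbf{Non-branching cases.} For $c = \eff{p}{e'}; c_0$, $c = \com{p}{q}{v}{u}; c_0$ and $c = \call{f}; c_0$, two unfoldings of the left side give the head constructor applied to $(c_0 \seq d) \seq e$. One unfolding of the right side gives the same constructor applied to $c_0 \seq (d \seq e)$. These agree by the induction hypothesis for $c_0$, followed by congruence of the constructor. In the $\chorColor{com}$ case the proof $p \neq q$ is carried through unchanged, so no proof-irrelevance argument is needed.

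\textbf{Broadcast case.} This is the only step needing care. For $c = \switch{p}{v}; k$ with $k : m \to \chor{}$, the left side unfolds to $\switch{p}{v}; (\lambda x.~((k~x) \seq d) \seq e)$, and the right side to $\switch{p}{v}; (\lambda x.~(k~x) \seq (d \seq e))$. The two continuations are functions, so we close the goal by function extensionality (available in Lean via \lstinline{funext}). For each $x$ we then use the induction hypothesis at $k~x$.

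This case is legitimate because the structural induction principle Lean generates for $\chor{}$ gives a hypothesis for every $k~x$, since $\chor{}$ is a strictly positive inductive type. The induction hypothesis must be generalized over $d$ and $e$, or at least must not fix them in a way that prevents the per-$x$ application. In Lean the whole proof should go through as \lstinline{induction c} followed by \lstinline{simp [seq, *]}, with \lstinline{funext} used in the broadcast case.
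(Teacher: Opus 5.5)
Your proof is correct and takes the same route as the paper, which proves the lemma simply by induction on $c$; your case analysis, with function extensionality plus the per-message induction hypothesis in the broadcast case, is exactly what that induction amounts to. One small remark: generalizing over $d$ and $e$ does no harm, but it is not needed, because both stay fixed in every recursive call of $\seq$.
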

\begin{proof}
By induction on $c$.
\end{proof}

Now, we prove two lemmas showing that sequencing is compatible with the operational semantics.
The first lemma states that if a reduction step is valid on a choreography, it remains valid
when code is added to the end of the choreography (using the sequencing operator $\seq$).

\begin{lemma}
\label{lemma:lift1}
For choreographies $c, c', c''$, global states $g, g'$, and label $l$, if~$\stepC{c}{g}{l}{c'}{g'}$, then
$\stepC{c \seq c''}{g}{l}{c' \seq c''}{g'}$.
\end{lemma}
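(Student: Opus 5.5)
We proceed by induction on the derivation of $\stepC{c}{g}{l}{c'}{g'}$, with $c''$ universally quantified (it is in fact fixed throughout, so it may also be fixed beforehand). In each case we unfold the definition of $\seq$ from \cref{fig:seq} on both sides. Because $\seq$ is defined by recursion on its left argument, and every rule's source choreography begins with an explicit constructor, the left-hand side always reduces by one defining equation. We then reapply the same rule.

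The cases go as follows.
\begin{itemize}
\item \textsc{Crun}: $(\eff{p}{e}; c) \seq c'' = \eff{p}{e}; (c \seq c'')$, and \textsc{Crun} steps this to $c \seq c''$ with the same $g'$.
\item \textsc{Ccom}: this is analogous to \textsc{Crun}.
\item \textsc{Cbcast}: $(\switch{p}{v}; c) \seq c'' = \switch{p}{v}; (\lambda m.~(c~m) \seq c'')$ steps to $(\lambda m.~(c~m)\seq c'')(v(g_p))$. This is $\beta$-equal, hence definitionally equal, to $c(v(g_p)) \seq c''$, the required target.
\item \textsc{Ccall}: $(\call{f}; c) \seq c'' = \call{f}; (c \seq c'')$ steps to $\pden(f) \seq (c \seq c'')$. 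The required target is $(\pden(f) \seq c) \seq c''$. These two agree by associativity (\cref{lem:assoc}). In Lean we rewrite the goal with the associativity lemma before applying the \textsc{Ccall} constructor.
\item Delay rules: for \textsc{CdelayRun}, the source is $\eff{p}{e}; c \seq c''$, which unfolds to $\eff{p}{e}; (c \seq c'')$. The target $(\eff{p}{e}; c_1) \seq c''$ unfolds to $\eff{p}{e}; (c_1 \seq c'')$. The induction hypothesis gives $\stepC{c \seq c''}{g}{l}{c_1 \seq c''}{g'}$. The disjointness premise $\mathsf{disjoint}(\effl{p}, l)$ is unchanged, since the label is the same, so \textsc{CdelayRun} applies. \textsc{CdelayCom} is identical.
\end{itemize}

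The only step that is not purely definitional is the \textsc{Ccall} case, which needs \cref{lem:assoc}. The \textsc{Cbcast} case may require an explicit \lstinline{show} or \lstinline{simp} to expose the $\beta$-reduction before the constructor unifies. A secondary technical point is the dependent \lstinline{com} constructor, with its implicit message type and proof $h$. These should be carried through unchanged, so no difficulty is expected beyond making Lean's unifier see the unfolded form of $\seq$. For this we would state auxiliary simp lemmas for each defining equation of $\seq$.
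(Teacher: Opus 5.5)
Your proof is correct and follows the paper's argument: induction on the step derivation, where each rule is mapped to itself after unfolding $\seq$, and associativity of sequencing (\cref{lem:assoc}) is needed only in the \textsc{Ccall} case. You also spell out the \textsc{Cbcast} ($\beta$-reduction) and delay-rule (induction hypothesis) cases, which the paper leaves as ``similar'' to \textsc{Crun}.
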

\begin{proof}
By induction on the operational semantics rules.
Each rule is essentially mapped to itself, so we just discuss the \textsc{Crun} case, as the others
are similar. By assumption, we have $\stepC{\eff{p}{e}; c'}{g}{l}{c'}{g'}$ for some
$c', l, g, g'$, and want to show $\stepC{(\eff{p}{e}; c') \seq c''}{g}{l}{c' \seq c''}{g'}$.
By definition of sequencing, this is equal to
$\stepC{\eff{p}{e}; (c' \seq c'')}{g}{l}{c' \seq c''}{g'}$, which is a direct consequence of the
\textsc{Crun} rule.
In the case of the \textsc{Ccall} rule, we additionally have to use associativity of sequencing.
\end{proof}

The second lemma state that this property also holds for the reflexive transitive closure of the choreography semantics, written $\stepCS{c}{g}{c'}{g'}$.

\begin{lemma}
\label{lemma:lift}
For choreographies $c, c', c''$, global states $g, g'$, and label $l$, if
$\stepCS{c}{g}{c'}{g'}$, then
$\stepCS{c \seq c''}{g}{c' \seq c''}{g'}$.
\end{lemma}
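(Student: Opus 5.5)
We argue by induction on the derivation of the multi-step relation $\stepCS{c}{g}{c'}{g'}$, viewed as the reflexive-transitive closure of the labelled single-step relation with the labels forgotten. The closure is presumably given inductively by a reflexivity constructor and a constructor that prepends one labelled step to a multi-step derivation. Our plan is to induct on that structure, generalizing over the starting choreography and state and keeping $c''$ fixed.

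In the reflexive case we have $c = c'$ and $g = g'$. The goal $\stepCS{c \seq c''}{g}{c \seq c''}{g}$ then holds immediately by reflexivity.

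In the step case we have a single step $\stepC{c}{g}{l}{c_1}{g_1}$ followed by a multi-step $\stepCS{c_1}{g_1}{c'}{g'}$. We apply \cref{lemma:lift1} to the single step, which gives $\stepC{c \seq c''}{g}{l}{c_1 \seq c''}{g_1}$. The induction hypothesis, applied to the remaining multi-step, gives $\stepCS{c_1 \seq c''}{g_1}{c' \seq c''}{g'}$. Prepending the first of these to the second with the closure's step constructor yields the goal.

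We expect no step to be a real obstacle, because all the substantive work, including the associativity argument for \textsc{Ccall}, is already done in \cref{lemma:lift1}. The only point needing care is bookkeeping in the mechanization. The induction must be set up so the hypothesis is general in the intermediate configuration $\stat{c_1}{g_1}$; this is automatic if we induct on the closure relation itself rather than on $c$. If the closure is instead defined by appending steps at the end, we run the same argument symmetrically, or use a transitivity lemma for the closure.
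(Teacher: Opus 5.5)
Your proposal is correct and follows the paper's proof: induction on the derivation of the reflexive-transitive closure, with the base case by reflexivity and the step case combining Lemma~\ref{lemma:lift1} with the induction hypothesis.
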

\begin{proof}
By induction on the derivation of $\stepCS{c}{g}{c'}{g'}$. The base case is trivial,
the step case uses Lemma~\ref{lemma:lift1}.
\end{proof}

We can finally prove partial correctness of our interpreter.

\begin{theorem}[Partial Correctness of Interpreter]
For choreography $c$, global states $g, g'$, and natural number $n$, if
the interpreter given $n$ fuel evaluates $c$ in state $g$ to a state $g'$, then
$\stepCS{c}{g}{\done}{g'}$.
\end{theorem}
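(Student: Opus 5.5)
We prove the statement by induction on the fuel $n$, generalizing over the choreography $c$ and the states $g, g'$. Inside the inductive step we do a case analysis on $c$, following the clauses of \lstinline{Chor.interp}.

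\emph{Base case.} If $n = 0$, the interpreter returns \lstinline{none}. The hypothesis that it returns some $g'$ is therefore contradictory, and the case is discharged.

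\emph{Step case, simple constructors.} Let the fuel be $n+1$.
\begin{itemize}
\item For $\done{}$, the interpreter returns $g$, so $g' = g$ and reflexivity of $\chorColor{\to}^\star$ gives the claim.
\item For $\eff{p}{e}; c$, the interpreter's result comes from running $c$ with fuel $n$ on $g[p \mapsto I_p(e,g_p)]$. The induction hypothesis gives $\stepCS{c}{g[p \mapsto I_p(e,g_p)]}{\done}{g'}$. We prepend one \textsc{Crun} step to obtain the required reduction sequence.
\item The cases $\com{p}{q}{v}{u}; c$ and $\switch{p}{v}; c$ are analogous. We prepend a \textsc{Ccom} step, respectively a \textsc{Cbcast} step; for \textsc{Cbcast} the induction hypothesis is instantiated at the choreography $c(v(g_p))$.
\end{itemize}
Since the induction hypothesis must be used at choreographies other than the original one, it is essential that $c$ and $g$ are generalized before inducting on the fuel.

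\emph{Step case, procedure calls.} This is the main obstacle. The interpreter computes $\pden(f)$ with fuel $n$ to some intermediate state $g''$, and then runs $c$ on $g''$ with fuel $n$ to obtain $g'$. We first invert the monadic bind on \lstinline{Option}, which yields both intermediate equalities. The induction hypothesis applied twice then gives
\[ \stepCS{\pden(f)}{g}{\done}{g''} \quad\text{and}\quad \stepCS{c}{g''}{\done}{g'}. \]
Next we apply Lemma~\ref{lemma:lift} to the first reduction with $c'' = c$, obtaining $\stepCS{\pden(f) \seq c}{g}{\done \seq c}{g''}$. By definition of sequencing, $\done \seq c = c$.

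We then assemble the full reduction:
\begin{itemize}
\item one \textsc{Ccall} step $\stepC{\call{f}; c}{g}{\calll}{\pden(f) \seq c}{g}$;
\item followed by the lifted reduction to $\stat{c}{g''}$;
\item followed by the second reduction to $\stat{\done}{g'}$.
\end{itemize}
These are joined using transitivity of the reflexive-transitive closure. This step needs the usual closure lemmas (prepending a single step and concatenating two sequences), which we establish first if they are not already available from the library.
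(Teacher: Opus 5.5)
Your proposal is correct and follows essentially the same route as the paper's proof. Both argue by induction on the fuel with a case split on the choreography. The call case is handled the same way: you invert the bind, apply the induction hypothesis twice, lift the reduction of $\pden(f)$ with Lemma~\ref{lemma:lift} (using $\done \seq c = c$), and compose it with \textsc{Ccall} and the reduction of the continuation.
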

\begin{proof}
We want to return $\stepCS{c}{g}{\done}{g'}$.
First, we perform induction on $n$. If $n = 0$, the interpreter cannot return a state, so the
statement hold vacuously. If $n = m+1$, we perform a case distinction on $c$.

(i) $\eff{p}{e}; c'$ or $\com{p}{q}{v}{u}; c'$ or $\switch{p}{v}; c'$:
The induction hypothesis gives us \[\stepCS{c''}{g''}{\done}{g'},\] where
$\stat{c''}{g''}$ can be reached from $\stat{c'}{g'}$ in one step, by using the
rule in the semantics corresponding to the construct (\textsc{Crun}, \textsc{Ccom}, or \textsc{Cbcast}).
We compose the respective rule (schematically: $\stepC{c}{g}{l}{c''}{g''}$) with the induction hypothesis to get $\stepCS{c}{g}{\done}{g'}$.

(ii) $\done{}$: By reflexivity, we immediately have\\
$\stepCS{\done}{g}{\done}{g}$.

(iii) $\call{f}; c'$: In the interpreter, this case involves nested recursion,
so we first perform a case distinction on the inner recursive call. If it fails, then the whole call fails, so the statement hold vacuously.
Otherwise, the idea is that reducing $\call{f}; c'$ consists of three steps:
First, \textsc{Ccall} is applied. Second, $\pden{f} \seq c'$ is reduced. Third,
$c'$ is reduced.

We get the reduction of $\pden{f}$ via the induction hypothesis,
which we then lift with Lemma~\ref{lemma:lift} to a reduction
on $\pden{f} \seq c'$. We compose \textsc{Ccall}, the reduction on $\pden{f} \seq c'$, and the reduction on $c'$ (which we also get via the induction hypothesis), to get the complete execution.

\end{proof}

\begin{theorem}[Interpreter is Monotonous]
If the interpreter evaluates to state $g'$ with $n$ fuel, it also evaluates to $g'$ with $n+1$ fuel.
\end{theorem}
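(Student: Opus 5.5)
We prove the statement by induction on the fuel $n$, generalizing over both the choreography $c$ and the initial state $g$ (and the result $g'$). The induction hypothesis must be usable at arbitrary choreographies and states, because the recursive calls of \lstinline{Chor.interp} change both. Concretely, we show: for all $n$, for all $c, g, g'$, if \lstinline{c.interp d g n = some g'}, then \lstinline{c.interp d g (n+1) = some g'}.

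For $n = 0$, the interpreter returns \lstinline{none} regardless of $c$, so the hypothesis is contradictory and the case is vacuous. For $n = m+1$, we do a case distinction on $c$, mirroring the definition.

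\begin{itemize}
\item In the case $\done{}$, both fuel values $m+1$ and $m+2$ return \lstinline{some g} by definition, so the two results agree.
\item In the cases $\eff{p}{e}; c'$, $\com{p}{q}{v}{u}; c'$ and $\switch{p}{v}; c'$, the interpreter unfolds to a single recursive call with fuel $m$ on the updated state (respectively on $c'(v(g_p))$). The hypothesis states that this call returns \lstinline{some g'}. The induction hypothesis, instantiated at that choreography and state, then gives the same result with fuel $m+1$, which is exactly the unfolding of the call with fuel $m+2$.
\end{itemize}

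The only case needing care is $\call{f}; c'$, where the interpreter computes \lstinline{(d f).interp d g m >>= (c'.interp d . m)}. Since the bind returned \lstinline{some g'}, the inner call must have returned \lstinline{some g''} for some $g''$, and \lstinline{c'.interp d g'' m = some g'}. We obtain this by case analysis on the inner \lstinline{Option}, discarding the \lstinline{none} branch because it yields \lstinline{none}. We then apply the induction hypothesis twice: once to $\pden(f)$ at state $g$, and once to $c'$ at state $g''$. This upgrades both calls to fuel $m+1$, and rewriting the bind with fuel $m+2$ gives \lstinline{some g'}.

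The main (mild) obstacle is this nested-recursion case. The proof must be set up so that the induction hypothesis quantifies over all choreographies and states; if $c$ or $g$ were fixed, the two recursive invocations in the \lstinline{call} case could not be handled. In Lean we would state the lemma with $c$ and $g$ universally quantified after \lstinline{induction n}, or use \lstinline{generalizing}. We would then simplify the \lstinline{Option} bind with \lstinline{Option.bind_eq_some} to extract $g''$. No lemmas about sequencing are needed here, since the interpreter treats \lstinline{call} by direct nested evaluation rather than by $\seq$.
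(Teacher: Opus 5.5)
Your proof is correct and takes essentially the paper's route. The paper argues ``by induction over the definition of the interpreter'', i.e.\ functional induction on its recursion structure, and that is exactly your induction on the fuel (generalized over the choreography and the states) followed by a case split on the choreography, with the induction hypothesis applied to both recursive calls in the nested $\call{f}; c$ case.
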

\begin{proof}
By induction over the definition of the interpreter.
\end{proof}

\subsection{Confluence}
\label{sec:confluence}

As discussed, our semantics is non-deterministic. We show that the non-determinism is well-behaved by proving confluence, which is also relevant the validity of our Hoare logic. For the mechanization in this subsection and in Section~\ref{sec:hoare}, we made use of LLMs (Github Copilot with GPT 5.4).

We first show that computation steps with different labels are independent, meaning they do not block each other, and it does not matter in which order we execute them.

\begin{lemma}[Diamond]
If $\stepC{c}{g}{l_1}{c_1}{g_1}$ and\\$\stepC{c}{g}{l_2}{c_2}{g_2}$ with $l_1 \neq l_2$, there exist $c_3, g_3$ such that $\stepC{c_1}{g_1}{l_2}{c_3}{g_3}$
and\\$\stepC{c_2}{g_2}{l_1}{c_3}{g_3}$.
\end{lemma}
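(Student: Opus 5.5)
We prove the diamond property by induction on the derivation of the first step $\stepC{c}{g}{l_1}{c_1}{g_1}$, followed by inversion on the second step. Both derivations act on the same $c$, so the head constructor of $c$ limits which pairs of rules can occur. If $c = \switch{p}{v}; c'$ or $c = \call{f}; c'$, the only applicable rules are \textsc{Cbcast} and \textsc{Ccall}, respectively. Both steps then carry the same label, which contradicts $l_1 \neq l_2$. If $c = \done$, no step exists. This leaves $c = \eff{p}{e}; c'$ and $c = \com{p}{q}{v}{u}; c'$, where each step is either the head rule (\textsc{Crun}/\textsc{Ccom}) or a delay rule.

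\emph{Head/head.} Both labels equal $\effl{p}$ (resp.\ $\sendl{p}{q}$), which contradicts $l_1 \neq l_2$.

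\emph{Head/delay (and symmetric).} Say $l_1 = \effl{p}$ with $c_1 = c'$ and $g_1 = g[p \mapsto I_p(e,g_p)]$. The second step is \textsc{CdelayRun}, built from $\stepC{c'}{g}{l_2}{c_2'}{g_2}$ with $\mathsf{disjoint}(\effl{p}, l_2)$ and $c_2 = \eff{p}{e}; c_2'$. We take $c_3 = c_2'$ and $g_3 = g_2[p \mapsto I_p(e, g_p)]$. Then $\stepC{c_2}{g_2}{l_1}{c_3}{g_3}$ holds by \textsc{Crun}, provided $(g_2)_p = g_p$. For the other edge we need $\stepC{c'}{g_1}{l_2}{c_2'}{g_3}$. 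This does not follow from the induction hypothesis; it needs a \emph{frame lemma}: if $\stepC{c}{g}{l}{c'}{g'}$ and role $r$ is not involved in $l$, then (a) $g'_r = g_r$, and (b) for any $x$, $\stepC{c}{g[r \mapsto x]}{l}{c'}{g'[r \mapsto x]}$. We prove the frame lemma separately by induction on the step. The key observations are that every rule reads state only at roles in its label, and that \textsc{Cbcast} reads $g_p$ and so needs $p \neq r$, which follows because $r$ is not in $l$. The \textsc{Ccom} case is analogous but writes at $q$ and reads at $p$ and $q$; disjointness excludes both from $l_2$, so the same argument applies.

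\emph{Delay/delay.} Both steps come from steps of the continuation $c'$ with labels $l_1 \neq l_2$, each disjoint from the head label. The induction hypothesis gives $c_3', g_3$ closing the diamond on $c'$. We then wrap both closing steps with the same delay rule, which is possible because disjointness depends only on the labels, and set $c_3 = \eff{p}{e}; c_3'$ (resp.\ the \textsf{com} prefix).

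We expect the main obstacle to be the head/delay case, because of the global-state commutation. One part is proving the frame lemma with the right notion of "role not involved in $l$"; note that $\calll$ involves no role, so \textsc{Ccall} is trivially framed. The other part is the function-update algebra, namely $g[r\mapsto x][s\mapsto y] = g[s\mapsto y][r\mapsto x]$ for $r\neq s$ and reads through updates at distinct roles, which must be handled under dependent typing of $\mathsf{LState}$. We would state the frame lemma directly in the form needed, using "$r$ not involved in $l$" derived from $\mathsf{disjoint}$, to avoid many case splits on labels.
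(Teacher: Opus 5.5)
Your proof is correct. The paper states this lemma without a written proof, since it is discharged in the Lean artifact, so there is no textual argument to compare against. Your route is the natural one: induction on the first step, generalized over the second derivation, with inversion on the second step.

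All the cases check out:
\begin{itemize}
\item The \textsc{Cbcast}, \textsc{Ccall} and head/head cases contradict $l_1 \neq l_2$.
\item In the head/delay case, your frame lemma gives $(g_2)_p = g_p$ (and also $(g_2)_q = g_q$ for \textsf{com}). It also gives the commuted step at $g[p \mapsto I_p(e,g_p)]$, which closes the square at $\stat{c_2'}{g_2[p \mapsto I_p(e,g_p)]}$.
\item The frame lemma itself holds because \textsc{Crun} and \textsc{Ccom} read and write only at roles in their label.
\item Since $\mathsf{disjoint}$ is false whenever $\broadl{p}$ or $\calll$ is involved, only \textsf{run}/\textsf{com} labels are ever delayed over. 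So the frame lemma is only ever needed for those labels, although your general version is also true.
\item In the delay/delay case, the inner labels are exactly $l_1 \neq l_2$, so the induction hypothesis applies. Re-wrapping both closing steps with the same delay rule uses the two disjointness premises you already have.
\end{itemize}
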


Furthermore, all the non-determinism is contained in the label:

\begin{lemma}
If $\stepC{c}{g}{l}{c'}{g'}$ and $\stepC{c}{g}{l}{c''}{g''}$, then
$c' = c''$ and $g' = g''$.
\end{lemma}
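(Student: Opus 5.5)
We argue by induction on the derivation of the first step $\stepC{c}{g}{l}{c'}{g'}$, generalizing over $c''$ and $g''$, and inverting the second step in each case. The key observation is that a step from a choreography with a fixed head constructor can only arise from a small set of rules. Which of these rules applies is determined by comparing the label with the head operation: either the label is exactly the one the head generates, or the label is disjoint from it. These two options are mutually exclusive, because $\mathsf{disjoint}(l,l)$ never holds.

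For the base cases \textsc{Cbcast} and \textsc{Ccall}, the choreography is $\switch{p}{v}; c$ or $\call{f}; c$. Only one rule has such a conclusion, so inversion of the second step gives the same rule. The results $c(v(g_p))$ and $\pden(f) \seq c$ are then syntactically identical, and the state is unchanged in both steps.

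For \textsc{Crun}, the first step has label $\effl{p}$ from $\eff{p}{e}; c$. The second step is either \textsc{Crun}, which yields the identical result $\stat{c}{g[p \mapsto I_p(e,g_p)]}$, or \textsc{CdelayRun}. The latter case is impossible, since it would require $\mathsf{disjoint}(\effl{p},\effl{p})$, i.e.\ $p \neq p$. The \textsc{Ccom} case is analogous, using that $\mathsf{disjoint}(\sendl{p}{q},\sendl{p}{q})$ fails.

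For \textsc{CdelayRun}, the first step is derived from $\mathsf{disjoint}(\effl{p},l)$ and a substep $\stepC{c}{g}{l}{c_1}{g_1}$. Inverting the second step, \textsc{Crun} is excluded: it would force $l=\effl{p}$, contradicting disjointness, for the same reason as above. The second step must therefore be \textsc{CdelayRun} with a substep $\stepC{c}{g}{l}{c_2}{g_2}$. The induction hypothesis gives $c_1=c_2$ and $g_1=g_2$, and congruence of $\eff{p}{e};\cdot$ finishes the case. \textsc{CdelayCom} is handled identically.

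The main obstacle is the dependent-type bookkeeping in Lean when inverting \textsc{Ccom}/\textsc{CdelayCom} steps, not any mathematical difficulty. The \lstinline{com} constructor carries an implicit message type $m$ and the proof $h : p \neq q$, so the \lstinline{cases} tactic must unify these indices, which may produce heterogeneous equalities on $v$ and $u$. I would first try plain \lstinline{cases} on the second derivation and discharge the impossible branches with a small auxiliary lemma $\neg\,\mathsf{disjoint}(l,l)$, proved by case analysis on $l$. If the dependent elimination gets stuck, I would instead state the inner lemma with the first step's choreography abstracted as an equation $c_0 = c$, and then use injectivity of the constructors.
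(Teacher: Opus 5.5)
Your proof is correct: induction on the first derivation with inversion of the second, using that a label is never disjoint from itself to rule out the mixed \textsc{Crun}/\textsc{CdelayRun} and \textsc{Ccom}/\textsc{CdelayCom} combinations, is exactly the argument this lemma needs. The paper gives no written proof and defers to the Lean mechanization, so there is nothing to compare against; your route is the standard one.
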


Also, if two steps are possible, they either have the same or disjoint labels (in the first case, they must be the same step, by the lemma above):

\begin{lemma}
If $\stepC{c}{g}{l_1}{c'}{g'}$ and $\stepC{c}{g}{l_2}{c''}{g''}$
then either $l_1 = l_2$ or $\mathsf{disjoint}(l_1, l_2)$.
\end{lemma}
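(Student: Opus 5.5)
The plan is to induct on the derivation of the first step $\stepC{c}{g}{l_1}{c'}{g'}$ and, in each case, invert the second step $\stepC{c}{g}{l_2}{c''}{g''}$. This inversion is possible because both steps start from the same choreography $c$, whose head constructor limits which rules can apply.

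For the base rules the inversion is immediate.
\begin{itemize}
\item \textsc{Cbcast} and \textsc{Ccall}: the head constructor ($\switch{p}{v}$ or $\call{f}$) admits only that one rule, since there are no delay rules for broadcasts or calls. So $l_2 = l_1$.
\item \textsc{Crun} at process $p$ (head $\eff{p}{e}$): the second step is either \textsc{Crun} again, so $l_2 = \effl{p} = l_1$, or \textsc{CdelayRun}. The premise of \textsc{CdelayRun} gives $\mathsf{disjoint}(\effl{p}, l_2)$, which is exactly $\mathsf{disjoint}(l_1,l_2)$.
\item \textsc{Ccom}: symmetric, with head $\com{p}{q}{v}{u}$ and the premise of \textsc{CdelayCom}.
\end{itemize}

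For the delay cases, say \textsc{CdelayRun} with head $\eff{p}{e}$, we have $\mathsf{disjoint}(\effl{p}, l_1)$ and an inner step from the tail $c_0$ with label $l_1$. Invert the second step.
\begin{itemize}
\item If it is \textsc{Crun}, then $l_2 = \effl{p}$. We conclude $\mathsf{disjoint}(l_1,l_2)$ by symmetry of $\mathsf{disjoint}$.
\item If it is \textsc{CdelayRun}, it also contains an inner step from $c_0$ with label $l_2$. The induction hypothesis on the two inner steps gives $l_1 = l_2$ or $\mathsf{disjoint}(l_1,l_2)$ directly.
\end{itemize}
The \textsc{CdelayCom} case is identical, using \textsc{Ccom} and symmetry. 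Note that the induction hypothesis has to be generalized over $l_2$, $c''$ and $g''$, so it is best to state the lemma with those quantified inside before inducting.

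The only real auxiliary fact is the symmetry lemma $\mathsf{disjoint}(l,l') \Rightarrow \mathsf{disjoint}(l',l)$. It follows by case analysis on both labels, unfolding the definition. I expect this to be the main (mild) obstacle: the case split over label pairs is somewhat tedious, and in the mechanization one must ensure that the role-inequality hypotheses stated for $\sendl{p}{q}$ versus $\effl{r}$ line up in both orders. The rest is routine inversion, which in Lean is handled with \texttt{cases} on the second derivation, where dependent pattern matching on $c$ rules out the impossible rules automatically.
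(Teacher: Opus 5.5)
Your proof is correct. You induct on the first derivation, generalized over $l_2$, $c''$, $g''$, and invert the second step using the head constructor of $c$; symmetry of $\mathsf{disjoint}$ then handles the mixed \textsc{CdelayRun}/\textsc{Crun} and \textsc{CdelayCom}/\textsc{Ccom} cases. The paper states this lemma without a written proof, and your argument is the routine induction-plus-inversion proof one would expect behind it.
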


We can now show confluence, i.e. that any two reduction sequences can be joined.

\begin{theorem}[Confluence]
If $\stepCS{c}{g}{c_1}{g_1}$ and\\$\stepCS{c}{g}{c_2}{g_2}$, then
there exist $c_3, g_3$ such that\\$\stepCS{c_1}{g_1}{c_3}{g_3}$ and
$\stepCS{c_2}{g_2}{c_3}{g_3}$.
\end{theorem}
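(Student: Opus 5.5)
The plan is to derive confluence from a strong one-step diamond property and then lift it to the reflexive-transitive closure by a standard tiling argument. First I establish the local property: if $\stepC{c}{g}{l_1}{c_1}{g_1}$ and $\stepC{c}{g}{l_2}{c_2}{g_2}$, then there exist $c_3, g_3$ such that $\stat{c_1}{g_1}$ and $\stat{c_2}{g_2}$ each reach $\stat{c_3}{g_3}$ in \emph{at most one} step. I split on whether $l_1 = l_2$.
\begin{itemize}
\item If $l_1 = l_2$, the determinism lemma gives $c_1 = c_2$ and $g_1 = g_2$. I then take $\stat{c_3}{g_3} = \stat{c_1}{g_1}$, which is reached from both sides in zero steps.
\item If $l_1 \neq l_2$, the Diamond lemma directly provides a common successor reachable in exactly one step from each side.
\end{itemize}
So the single-step relation (with labels forgotten) has the strong diamond property with "zero or one step" on each side.

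Next I lift this to a strip lemma by induction on the length of a multi-step derivation. The claim is: if $\stepC{c}{g}{l}{c_1}{g_1}$ and $\stepCS{c}{g}{c_2}{g_2}$, then there is $\stat{c_3}{g_3}$ with $\stepCS{c_1}{g_1}{c_3}{g_3}$, and with $\stat{c_2}{g_2}$ reaching $\stat{c_3}{g_3}$ in at most one step.
\begin{itemize}
\item In the base case the star derivation is reflexive, and I take the single step itself.
\item In the inductive case the star derivation is $\stat{c}{g} \to \stat{d}{h}$ followed by $\stepCS{d}{h}{c_2}{g_2}$. I apply the local property to the two first steps out of $\stat{c}{g}$, which yields $\stat{e}{k}$ with $\stat{c_1}{g_1}$ reaching it in $\le 1$ step and $\stat{d}{h}$ reaching it in $\le 1$ step.
\item If $\stat{d}{h}$ reaches $\stat{e}{k}$ in zero steps, then $\stat{e}{k} = \stat{d}{h}$, and I simply append the remaining path to $\stat{c_2}{g_2}$ on the $c_1$ side.
\item Otherwise I apply the induction hypothesis to the step $\stat{d}{h} \to \stat{e}{k}$ and the shorter star $\stat{d}{h} \to^\star \stat{c_2}{g_2}$.
\end{itemize}
The lemma should be stated so that the induction generalizes over the starting configuration.

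Finally, confluence follows by induction on the first derivation $\stepCS{c}{g}{c_1}{g_1}$, generalizing over $c_2, g_2$. Reflexivity is trivial. In the step case, I use the strip lemma to close the first step against the whole second derivation. Its ``$\le 1$ step'' conclusion embeds into $\to^\star$, and I then apply the induction hypothesis to the remaining first derivation and the obtained star path.

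The main obstacle I expect is the inductive bookkeeping. In the strip lemma the induction must be on the star derivation with the single step generalized, and the "zero or one step" case split has to be handled uniformly. In Lean I would formalize ``at most one step'' as equality-or-step, and choose the direction of the star induction (head versus tail) that matches the decomposition $\stat{c}{g} \to \stat{d}{h} \to^\star \cdots$. The actual semantic content is entirely contained in the Diamond and determinism lemmas, which I assume.
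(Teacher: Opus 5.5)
Your proposal is correct and follows essentially the paper's route: the paper also proves semi-confluence (your strip lemma) by induction on $\to^\star$ from the one-step lemmas, and then derives confluence by the standard argument. The only difference is that you never use the third lemma (labels are either equal or disjoint), and you do not need it, since the Diamond lemma as stated requires only $l_1 \neq l_2$.
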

\begin{proof}
We show semi-confluence by induction on $\chorColor{\chorColor{\to}^\star}$ and using the three lemmas above, from which confluence follows by a standard argument
\end{proof}

We derive the following lemma showing that if a configuration terminates, then reducing it once leads to a configuration that still terminates with the same result.
This lemma will be useful when verifying several of the Hoare logic rules.

\begin{lemma}
\label{lem:stepdone}
If $\stepC{c}{g}{l}{c'}{g'}$ and $\stepCS{c}{g}{\done{}}{g''}$, then
$\stepCS{c'}{g'}{\done{}}{g''}$.
\end{lemma}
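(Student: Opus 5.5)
Confluence (the preceding theorem) is the natural tool here, together with the observation that $\done{}$ is a normal form. Fix the step $\stepC{c}{g}{l}{c'}{g'}$ and the terminating run $\stepCS{c}{g}{\done{}}{g''}$. The single step is in particular a reduction sequence $\stepCS{c}{g}{c'}{g'}$ of length one, so confluence yields a configuration $\stat{c_3}{g_3}$ with
\[ \stepCS{c'}{g'}{c_3}{g_3} \quad\text{and}\quad \stepCS{\done{}}{g''}{c_3}{g_3}. \]

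Next, I would prove a small auxiliary fact: $\done{}$ has no outgoing steps, i.e.\ there is no $\stepC{\done{}}{h}{l}{d}{h'}$. This is immediate by inversion on the operational semantics: every rule in \cref{fig:sem-choreographies} has a conclusion whose source choreography is headed by $\chorColor{run}$, $\chorColor{com}$, $\chorColor{bcast}$ or $\chorColor{call}$. Then, by case analysis on the reflexive–transitive closure $\stepCS{\done{}}{g''}{c_3}{g_3}$, the only possibility is the reflexive case, so $c_3 = \done{}$ and $g_3 = g''$. Substituting into the first joined sequence gives $\stepCS{c'}{g'}{\done{}}{g''}$, as required.

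An alternative, more elementary route avoids full confluence. It proceeds by induction on the length of $\stepCS{c}{g}{\done{}}{g''}$, using the three lemmas preceding the confluence theorem. Write the first step of the run as $\stepC{c}{g}{l_1}{c_1}{g_1}$. If $l_1 = l$, determinism of equally-labelled steps gives $c_1 = c'$ and $g_1 = g'$, and we are done. Otherwise the diamond lemma lets us commute the two steps, and the induction hypothesis applies to the shorter run from $\stat{c_1}{g_1}$. In that case we still have to reassemble the run from $\stat{c'}{g'}$; this needs the diamond output to be prepended to the result of the induction hypothesis, and so ultimately requires the same normal-form argument. The confluence-based proof is therefore cleaner.

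I expect no real obstacle. The only mildly fiddly point in Lean is the inversion showing that $\done{}$ is irreducible, and the subsequent case analysis on the star relation, where the transitive case must be discharged using that irreducibility.
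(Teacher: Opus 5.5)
Your main argument is correct and matches the paper's proof: apply confluence to the one-step sequence and the terminating run, then use that $\stat{\done{}}{g''}$ has no outgoing steps to force the join point to be $\stat{\done{}}{g''}$ itself. One small correction to your alternative route: the irreducibility of $\done{}$ is needed only in the base case of length zero, where $c = \done{}$ would contradict the given step, while prepending the diamond step to the induction hypothesis is plain transitivity.
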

\begin{proof}
Follows from confluence and the fact that no steps are possible in state $\stat{\done{}}{g''}$.
\end{proof}

Intuitively, the reason why this lemma is useful is that Hoare logic makes assertions about \emph{all} (terminating) executions. With this lemma, we need only consider one execution and transport its results to the other ones.

\subsection{Hoare Logic}
\label{sec:hoare}

In this subsection, we define a partial correctness Hoare logic for choreographies.
We first define a judgment $\vdash_\pden \{P\}~c~\{Q\}$ semantically: The judgment is valid
when every terminating execution of $c$ (using a procedure interpretation $\pden$) that starts in a state satisfying precondition $P$ ends in a state satisfying postcondition $Q$. Both $P$ and $Q$ are predicates over global states.

An example of using Hoare logic is shown in Figure~\ref{fig:hoare-ex}. We assume processes $p, q, r$ each of whose state is a natural number. In the program $p, q$ send their respective states to $r$, which adds them to its own state. We prove that if $r$'s state is equal to 0 at the beginning, it is equal to the sum of $p$'s and $q$'s state at the end (the example uses no procedures, so we omit the procedure interpretation $\pden$).

\begin{figure*}[t]
\begin{mathpar}
\inferrule*[right=Hoare-Com]{\inferrule*[right=Hoare-Com]{\inferrule*[right=Hoare-Done]{ }{\vdash \{\lambda g.~g_r = g_p + g_q\}~\done{}~\{\lambda g.~g_r = g_p + g_q\}}}{\vdash \{\lambda g. g_r = g_p\}~\com{q}{r}{\lambda n.~n}{\lambda s~n.~s+n};  \done{}~\{\lambda g.~g_r = g_p + g_q\}}}{\vdash \{\lambda g.~g_r = 0\}~\com{p}{r}{\lambda n.~n}{\lambda s~n. s+n};\com{q}{r}{\lambda n.~n}{\lambda s~n.~s+n};\done{}~\{\lambda g.~g_r = g_p + g_q\}}
\end{mathpar}
\caption{Verification, in Hoare logic, of a simple addition protocol. Technically, both applications of the communication rule are preceded by an application of the consequence rule;  we omit these for readability.}
\label{fig:hoare-ex}
\end{figure*}

We now give a formal, semantic definition of Hoare logic judgments.

\begin{definition}[Hoare Triple]
\label{def:triple}
\[
\vdash_\pden \{P\}~c~\{Q\} \triangleq \forall g, g'.~ P(g) \Rightarrow \stepCS{c}{g}{\done{}}{g'} \Rightarrow Q(g')
\]
\end{definition}

We then state the Hoare logic rules as theorems about this judgment. The rules are listed in Figure~\ref{fig:hoare}.
\textsc{Hoare-Conseq} allows strengthening preconditions and weakening postconditions.
\textsc{Hoare-Done} says that $\done{}$ does not change the state.
\textsc{Hoare-Seq} al lows composing Hoare triples using $\seq$.
\textsc{Hoare-Run} captures the effect of $\eff{p}{e}$ on $p$'s state,
and \textsc{Hoare-Com} does the same for the effect of $\com{p}{q}{v}{u}$ on $q$'s state.
In \textsc{Hoare-Bcast}, we need to prove the postcondition holds for possible
continuations $c(m)$ (given a message $m$).
\textsc{Hoare-Call} expresses that $\call{f}; c$ is essentially equivalent to
$\pden{f} \seq c$ and we hence need to proof triples for both $\pden{f}$ and $c$.

\begin{figure}
\begin{mathpar}
\inferrule[Hoare-Conseq]{\vdash_\pden \{P'\}~c~\{Q'\} \\ \forall g.~P(g) \Rightarrow P'(g) \\ \forall g.~Q'(g) \Rightarrow Q(g)}
  {\vdash_\pden \{P\}~c~\{Q\}}

\inferrule[Hoare-Done]{ }
  {\vdash_\pden \{P\}~\done{}~\{P\}}

\inferrule[Hoare-Seq]{\vdash_\pden \{P\}~c~\{Q\} \\ \vdash_\pden \{Q\}~c'~\{R\}}
  {\vdash_\pden \{P\}~c \seq c'~\{R\}}

\inferrule[Hoare-Run]{\vdash_\pden \{P\}~c~\{Q\}}
  {\vdash_\pden \{\lambda g.~P(g[p \mapsto I_p(e, g_p)])\}~\eff{p}{e}; c~\{Q\}}

\inferrule[Hoare-Com]{\vdash_\pden \{M\}~c~\{Q\}}
  {\vdash_\pden \{\lambda g.~M(g[q \mapsto u(g_q, v(g_p))])\}~\com{p}{q}{v}{u}; c~\{Q\}}

\inferrule[Hoare-Bcast]{\forall m.~\vdash_\pden \{M(m)\}~c(m)~\{Q\}}
  {\vdash_\pden \{\lambda g.~M(v(g_p), g)\}~\switch{p}{v}; c~\{Q\}}

\inferrule[Hoare-Call]{\vdash_\pden \{P\}~\pden(f)~\{M\} \\ \vdash_\pden \{M\}~c~\{Q\}}
  {\vdash_\pden \{P\}~\call{f}; c~\{Q\}}
\end{mathpar}
\caption{Hoare logic rules for choreographies.}
\label{fig:hoare}
\end{figure}

To prove the soundness of the \textsc{Hoare-Seq} rule, we need to be able to split a terminating execution on $c \seq{} c'$ into two execution on $c$ and $c'$, so that we can use the premises of the rule.

\begin{lemma}[Split $\labelC{\star}$]
\label{lem:split}
If $\stepCS{c \seq{} c'}{g}{\done{}}{g''}$ then there exists a state
$g'$ such that $\stepCS{c}{g}{\done{}}{g'}$ and $\stepCS{c'}{g'}{\done{}}{g''}$.
\end{lemma}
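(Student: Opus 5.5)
I will prove the statement by induction on the length of the reduction sequence $\stepCS{c \seq c'}{g}{\done{}}{g''}$, generalizing over $c$ and $g$. The key tool is Lemma~\ref{lem:stepdone}: since the whole sequence is confluent toward $\stat{\done{}}{g''}$, I do not need to analyse the actual first step of the given execution. Instead I pick a convenient step of $c \seq c'$, namely one coming from $c$ alone, and transport termination along it.

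The first case is $c = \done{}$. Then $c \seq c' = c'$ by definition of $\seq$. I take $g' = g$, use reflexivity for $\stepCS{\done{}}{g}{\done{}}{g}$, and the hypothesis directly gives the execution of $c'$.

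The second case is $c \neq \done{}$. By Choreography Progress (\cref{thm-chor-progress}) there is a step $\stepC{c}{g}{l}{c_1}{g_1}$. By Lemma~\ref{lemma:lift1} this gives $\stepC{c \seq c'}{g}{l}{c_1 \seq c'}{g_1}$. Lemma~\ref{lem:stepdone} then yields $\stepCS{c_1 \seq c'}{g_1}{\done{}}{g''}$. I apply the induction hypothesis to this execution and obtain $g'$ with $\stepCS{c_1}{g_1}{\done{}}{g'}$ and $\stepCS{c'}{g'}{\done{}}{g''}$. Prepending the step $\stepC{c}{g}{l}{c_1}{g_1}$ finishes the case.

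The main obstacle is the induction measure. The execution produced by Lemma~\ref{lem:stepdone} is not a suffix of the original one, so induction on the derivation of $\chorColor{\to}^\star$ does not apply directly. I would handle this by making the measure quantitative, which means strengthening Lemma~\ref{lem:stepdone} (or its use) to track step counts. For this choreography semantics every complete execution from a fixed configuration should have the same length: the diamond lemma preserves lengths, and distinct steps with the same label coincide. I would therefore prove a counted version of confluence, in which a terminating execution of length $n+1$ plus one step gives a terminating execution of length $n$ from the reduct. This follows from the same semi-confluence induction with the diamond lemma, since the diamond closes in exactly one step on each side. I then run the argument above by strong induction on $n$.

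If the counting turns out awkward in Lean, my fallback is to induct directly on the given derivation, whose first step is $\stepC{c \seq c'}{g}{l}{d}{h}$. I would prove an inversion lemma, the converse of Lemma~\ref{lemma:lift1}: any step from $c \seq c'$ with $c \neq \done{}$ has the form $d = c_1 \seq c'$ with $\stepC{c}{g}{l}{c_1}{h}$, or else it delays past all of $c$'s run/com prefix into $c'$. The latter case is what makes this route messier, which is why I prefer the confluence-based route.
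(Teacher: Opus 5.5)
Your main route is correct and matches the paper's proof, which is given only as strong induction on the length of the execution of $c \seq c'$ (via the $n$-step relation). Your counted version of Lemma~\ref{lem:stepdone}, derived from the diamond lemma, same-label determinism and the fact that $\done{}$ cannot step, is what makes the length decrease. Together with Choreography Progress (\cref{thm-chor-progress}) and Lemma~\ref{lemma:lift1}, it fills in exactly the details the paper leaves implicit, so you do not need the fallback inversion route.
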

\begin{proof}
By strong induction on the length of the execution $\stepCS{c \seq{} c'}{g}{\done{}}{g''}$.
\end{proof}

The proof uses the notion of the length of an execution, which we formalize using the following relation saying that a choreography reduces to another one in $n$ steps:

\begin{definition}[N-Step Relation]
The n-step relation\\$\stepM{c}{g}{n}{c'}{g'}$
is inductively defined by

\begin{mathpar}
\stepM{c}{g}{0}{c}{g} \\
\inferrule{\stepC{c}{g}{l}{c'}{g'} \\ \stepM{c'}{g'}{n}{c''}{g''}}
{\quad\stepM{c}{g}{n+1}{c''}{g''}}
\end{mathpar}
\end{definition}

To use this relation, we need to map from $\chorColor{\twoheadrightarrow}^n$ to $\chorColor{\to}^l$ and back, which we can do by induction on the respective derivation.

With lemma~\ref{lem:split} in place, we can verify our Hoare logic.

\begin{theorem}
The rules in Fig.~\ref{fig:hoare} are valid theorems about the judgment in Definition~\ref{def:triple}.
\end{theorem}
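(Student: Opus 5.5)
We prove each rule separately by unfolding Definition~\ref{def:triple}, so each case assumes $P(g)$ and an execution $\stepCS{c}{g}{\done{}}{g'}$ and must establish the postcondition. The structural rules are direct. For \textsc{Hoare-Conseq} we chain the implications around the premise. For \textsc{Hoare-Done} we invert $\stepCS{\done{}}{g}{\done{}}{g'}$: no rule of Figure~\ref{fig:sem-choreographies} applies to $\done{}$, so the execution is reflexive and $g' = g$. For \textsc{Hoare-Seq} we apply Lemma~\ref{lem:split} to the execution of $c \seq c'$. This yields an intermediate $g''$ with $\stepCS{c}{g}{\done{}}{g''}$ and $\stepCS{c'}{g''}{\done{}}{g'}$. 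The first premise then gives $Q(g'')$, and the second gives $R(g')$.

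The rules for the individual constructs share one pattern. We exhibit a \emph{specific} first step of the choreography and use Lemma~\ref{lem:stepdone} to transport the given terminating execution to the reduct. This matters because the given execution may start with a delay step rather than the head action, and Lemma~\ref{lem:stepdone} lets us ignore that.
\begin{itemize}
\item \textsc{Hoare-Run}: take the \textsc{Crun} step to $\stat{c}{g[p \mapsto I_p(e,g_p)]}$. Lemma~\ref{lem:stepdone} gives $\stepCS{c}{g[p \mapsto I_p(e,g_p)]}{\done{}}{g'}$, the precondition says $P$ holds of exactly that state, and the premise yields $Q(g')$.
\item \textsc{Hoare-Com}: identical, using \textsc{Ccom}.
\item \textsc{Hoare-Bcast}: use \textsc{Cbcast} to reach $\stat{c(v(g_p))}{g}$, then instantiate the premise at $m = v(g_p)$.
\item \textsc{Hoare-Call}: use \textsc{Ccall} to obtain $\stepCS{\pden(f) \seq c}{g}{\done{}}{g'}$ via Lemma~\ref{lem:stepdone}. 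Then proceed exactly as for \textsc{Hoare-Seq}, splitting with Lemma~\ref{lem:split} and chaining the two premises through $M$.
\end{itemize}

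All of these cases rest on Lemma~\ref{lem:split} and Lemma~\ref{lem:stepdone}, which are already available (the latter comes from confluence). Given them, no case needs induction, and what remains is bookkeeping.

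The one place needing care in the mechanization is matching the state produced by the \textsc{Crun}/\textsc{Ccom} step syntactically with the function-update form in the precondition. The update $g[q \mapsto \cdot]$ is dependent on the role, so the two must be stated with the same definitional form; I would state the rules directly with the same update operator the semantics uses.

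I expect the main obstacle to be confirming that Lemma~\ref{lem:stepdone} applies in the communication and run cases without any disjointness side conditions. It does, since it holds for an arbitrary step, so I would invoke it uniformly.
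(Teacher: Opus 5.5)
Your proposal is correct and follows essentially the same route as the paper. The structural rules are discharged directly, with Lemma~\ref{lem:split} used for \textsc{Hoare-Seq}. For the construct rules, you take the head step (\textsc{Crun}, \textsc{Ccom}, \textsc{Cbcast}, \textsc{Ccall}) and transport the terminating execution to the reduct via Lemma~\ref{lem:stepdone}. In the \textsc{Hoare-Call} case you additionally split $\pden(f) \seq c$ with Lemma~\ref{lem:split}; the paper leaves this step implicit, and your version simply spells it out.
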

\begin{proof}
We go through each theorem:
\begin{itemize}
\item \textsc{Hoare-Conseq}: Follows directly from Definition~\ref{def:triple}.
\item \textsc{Hoare-Done}: From $\stepCS{\done{}}{g}{\done{}}{g'}$ follows $g = g'$, and hence $P(g)$ implies $P(g')$.
\item \textsc{Hoare-Seq}: Apply Lemma~\ref{lem:split} and the assumptions.
\item \textsc{Hoare-Run, Hoare-Com, Hoare-Bcast, Hoare-Call}: In these cases, we know
the program can make a step to $\stat{c}{g}$ and (by assumption) that it evaluates to $\stat{\done{}}{g'}$ (for some $c, g, g'$).
Applying Lemma~\ref{lem:stepdone} we get that $\stat{c}{g}$ also evaluates to $\stat{\done{}}{g'}$
\end{itemize}
\end{proof}

\section{Related Work}
\label{sec:rel}

\paragraph{Choreographies}

\citet{Montesi13} introduced choreographic programming; Montesi has also written
an introductory text book~\cite{Montesi23}.

\citet{CruzFilipeMP23} (improving on previous work~\cite{Cruz-FilipeMP21}) mechanize a choreographic language, which compared to ours, has label selection (and hence deals with \emph{knowledge of choice}), but no role-dependent methods or Hoare logic. They represent states of processes as assignments of values to variables; our approach can model this approach (by defining states to be functions from a type representing variables to a type representing values). However, our approach is simpler in not requiring variables or local expressions in the core calculus.

\citet{HirschG22} define a calculus for \emph{higher-order} typed choreographies, parametrized over a local language. They use de Bruijn syntax (and hence notions like substitution and weakening), while our local expressions are represented as metalanguage (Lean) functions.

While choreographies are usually defined as \emph{languages}, \citet{ShenKK23} implement choreographic programming as a Haskell \emph{library} \emph{HasChor}. They feature a broadcasting construct for making a value available throughout the choreography, which our broadcasting is based on.

\citet{Cruz-FilipeM16b} describe how to implement a number of concurrent computation-heavy algorithms. Their communication construct is defined as a stateful update very similar to ours,
but the paper focusses on implementing algorithms and does not provide a formal semantics.

\citet{Pohjola22} mechanically verify a compiler from a choreographic language to an ML dialect. Their mechanization also allows computations to be performed by arbitrary metalanguage functions, but still uses an explicit context with variables and binders. They focus on semantic preservation for their compiler, while we treat broadcasting, role-dependent methods, confluence, and Hoare logic. 

\citet{Cruz-FilipeGMP23} describe a Hoare logic for choreographies; a mechanization was provided in work concurrent to ours~\cite{cruzfilipe26}. Their Hoare logic is a deep embedding, defining state formulas, substitution, signatures, and rules syntactically. We instead define Hoare logic semantically, representing rules as 
Lean theorems, which simplifies the mechanization and allows existing Lean tactics to be used more directly.

\paragraph{Session Types}

\emph{Binary} session types describe messages between two communicating participants~\cite{Honda93},
ensures that well-typed programs do not deadlock.
A later extension are \emph{multiparty} session types~\cite{Bettini08,CastagnaDP11, Honda16}, which are similar to choreographies in giving a unified description of a distributed
protocol. However, they are, as the name implies, types, and hence separate the specification of a protocol from its implementation.

Thiemann mechanized deadlock freedom for binary session types using a de Bruijn representation~\cite{Thiemann19}, and
later introduced the state transformer approach to embedding session types in a functional language~\cite{Thiemann23}, focussing on binary session types, and arguing for deadlock freedom without mechanized proofs.

\citet{Castro-PerezFJ26} mechanize a novel approach to multiparty session types where processes are directly checked against the global type, without projection.

\citet{TiroreBC25} point out flaws in the proof in the original multiparty session types paper~\cite{HondaYC08} and provide a mechanization of soundness for a restricted theory.
\citet{EkiciKY25} mechanize multiparty session types with subtyping, pointing out unsoundness in several
previous papers on the same topic.

\paragraph{HOAS} The advantages of modeling binders using metalanguage functions feature in works on HOAS~\cite{PfenningE88} and parametric HOAS~\cite{Chlipala08}.
Sadly, standard HOAS, where bindings are represented using functions \lstinline|Term → Term|, is not usable in theorem provers like Lean as it violates strict positivity, which is why parametric HOAS was developed. Parametric HOAS represents bindings using functions \lstinline|v → Term|, where \lstinline|v| is parametric. In contrast to both approaches, our metalanguage functions take \emph{values} as arguments, which is known as a mixed embedding~\cite{Chlipala21}.

\paragraph{Effects} Our definition of broadcasting uses a continuation $m \to \chor$ that depends on the broadcasted message. A similar approach is used to encode effectful operations in mixed embeddings for effectful computations, like free(r) monads~\cite{Kiselyov15}
and interaction trees~\cite{Xia20, Seassau25}.
The idea of free monads is that an effectful computation is represented as a tree, where every non-leaf node is an effectful operation, with one child for each possible return value of the operation. For example, our $\chorColor{bcast}$ takes a continuation \lstinline|m \to \chor|, which corresponds to an effectful operation that returns $m$. The leaf nodes represent the return value of the overall computation, loosely corresponding to our $\done{}$ construct.

It would be interesting to make this connection more concrete in future work:
Does the $\seq$ function correspond to monadic bind, and Lemma~\ref{lem:assoc} to the standard monadic associativity law?
Further, if one views the process language as another free monad, then one might consider endpoint projection as an unusual kind of effect handler~\cite{Shen2024, Bauer18} that returns a \emph{set} of computations,
making concurrency explicit.

\section{Conclusion}
\label{sec:conc}

We have demonstrated the utility of the state transformer approach for mechanization of distributed programming by formalizing a choreographic language with point-to-point communication, broadcasting, recursive procedures, and role-specific local methods.
The state transformer approach avoids dealing with variable binding issues and decouples the distributed and the local parts of computation by letting the meta-language handle the latter.
We proved soundness and completeness of endpoint projection, as well as deadlock freedom for projected networks.

\paragraph{Limitations and Future Work}

Our formalization of broadcasting involves the entire network. Similarly, recursive procedure calls involve synchronization. While some practical approaches like HasChor~\cite{ShenKK23} use this form of broadcasting, there are others which use the more fine-grained \emph{knowledge of choice}~\cite{Castagna12} approach, and extending our mechanization
to support this concept or the more advanced multiply-located values and enclaves~\cite{BatesN24} would improve its practical relevance.
A convenience issue of our formalization is that local states have to be defined explicitly; it would be better if they were inferred automatically from the choreography.

\section*{Acknowledgments}

This research work was supported by the National Research Center for Applied Cybersecurity ATHENE. ATHENE is funded jointly by the German Federal Ministry of Education and Research and the Hessian Ministry of Higher Education, Research and the Arts.
This work was funded by the LOEWE initiative (Hesse, Germany)\\ \ [LOEWE/4a//519/05/00.002(0013)/95],
by the LOEWE initiative (Hesse, Germany) within the
emergenCITY center [LOEWE/1/12/519/03/05.001(0016)/72], and by the Deutsche Forschungsgemeinschaft (DFG, German Research Foundation) as part of Germany's Excellence Strategy –- EXC-3057/1 ``Reasonable Artificial Intelligence'' –- Project No. 533677015.

\bibliographystyle{ACM-Reference-Format}
\bibliography{bib}

\end{document}